\documentclass[12pt,a4paper]{article}
\usepackage[utf8x]{inputenc}
\usepackage[T1]{fontenc}
\usepackage{amsfonts}
\usepackage{amsmath}
\usepackage[pdftex]{graphicx}
\usepackage[pdftex,linkcolor=black,pdfborder={0 0 0}]{hyperref} 
\usepackage{calc} 
\usepackage{enumitem} 
\usepackage{amsthm}
\usepackage{bm}
\usepackage{xcolor}
\usepackage{natbib}
\usepackage{placeins}
\usepackage{comment}
\usepackage{algorithm2e}
\RestyleAlgo{ruled}
\usepackage{dsfont}
\usepackage[makeroom]{cancel}
\usepackage{pdfpages}
\usepackage{multirow}
\usepackage{adjustbox}
\usepackage{amssymb}
\usepackage{stmaryrd}
\def\spacingset#1{\renewcommand{\baselinestretch}%
{#1}\small\normalsize} \spacingset{1}

\def \smi {\mathrm{SMI}}

\newtheorem{assumption}{Assumption}
\newtheorem{theorem}{Theorem}
\newtheorem{lemma}{Lemma}

\DeclareMathOperator{\HND}{HN}

\DeclareMathOperator{\ND}{\cal N}

\DeclareMathOperator{\UD}{U}

\newcommand\indicator[1]{\mathds{1}\left\{ #1 \right\}}

\newcommand{\E}{\mathbb{E}}
\newcommand{\KL}{\mathrm{KL}}
\newcommand{\dt}{\mathrm{d}}
\newcommand{\MD}{\mathsf{L}}
\newcommand{\argmin}{\operatornamewithlimits{argmin}} 
 
\DeclareMathOperator*{\bigtimes}{\vartimes}

\begin{document} 

\begin{titlepage}

\title{Bayesian Modular Inference for Copula Models with Potentially Misspecified Marginals}

\author{$\mbox{Lucas Kock}^1$, $\mbox{David T. Frazier}^2$, $\mbox{Michael Stanley Smith}^{3,\ast}$, and $\mbox{David J. Nott}^1$} 

\date{\today}
\maketitle
\thispagestyle{empty}
\noindent

\begin{center}
{\Large Abstract}
\end{center}
\vspace{-1pt}
\noindent Copula models of multivariate data are popular because they allow 
separate specification of marginal distributions and the copula function.
These components can be treated as inter-related modules in a modified Bayesian inference approach called ``cutting feedback''
that is robust 
to their misspecification. 
Recent work uses a two module approach, where all $d$
marginals form a single module, to robustify inference for the marginals 
against copula function misspecification, or vice versa. 
However, marginals can exhibit differing levels of misspecification, 
making it attractive to assign each its own module with an individual 
influence parameter controlling its contribution to a joint 
semi-modular inference (SMI) posterior. This 
generalizes existing two module SMI methods, which interpolate between 
cut and conventional posteriors using a single influence parameter.
We develop a novel copula SMI method and select the influence parameters using 
Bayesian optimization. It provides an efficient continuous relaxation of 
the discrete optimization problem over $2^d$ cut/uncut configurations. 
We establish theoretical properties of the resulting semi-modular posterior 
and demonstrate the approach on simulated and real data. The real data 
application uses a skew-normal copula model of asymmetric dependence 
between equity volatility and bond yields, where robustifying copula 
estimation against marginal misspecification is strongly motivated.

\vspace{20pt}
 
\noindent
{\bf Keywords}: Bayesian optimization, Cutting feedback, Modular inference, Semi-modular inference, Variational approximation.

\vspace*{\fill}
\noindent {\small\textbf{Acknowledgments:} David Nott's research was supported by the Ministry of Education, Singapore, under the Academic Research Fund Tier 2 (MOE-T2EP20123-0009), and he is affiliated with the Institute of Operations Research and Analytics at the National University of Singapore. Michael S. Smith and David T. Frazier gratefully acknowledge support by the Australian Research Council through a Discovery Project grant DP250101069.}

\vspace{20pt}

\noindent{\small
$^1$ Department of Statistics and Data Science, National University of Singapore\\
$^2$ Department of Econometrics and Business Statistics, Monash University\\
$^3$ Melbourne Business School, University of Melbourne\\
$^\ast$ Correspondence should be directed to Michael Smith at {\tt mike.smith@mbs.edu}\,.  
}

\end{titlepage}

\spacingset{1.5} 

\section{Introduction}
Copula models of multivariate continuous data are very popular because they
allow the marginal distributions and copula function characterizing the dependence
structure to be specified separately. However, in many problems specification of the copula function
or the 
marginals can be difficult. Here we consider modified 
Bayesian inference for copula models, in settings where we wish to 
robustify inference about the copula parameters
to misspecification of marginal distributions. 
Our work contributes to the literature on modular Bayesian inference, 
which decomposes a joint model into simpler inter-related sub-models called modules, and
attempts to limit the influence that misspecified modules can have on parameters
in well-specified ones. We consider each of the $d$ marginals as a separate module, and 
develop a ``semi-modular inference''~\citep[SMI;~][]{CarNic2020,CarNic2022,FraNot2025} method that robustifies inference
for the copula parameters from potential misspecification of one or more of the marginals.
We establish the theoretical properties of the resulting semi-modular posterior and demonstrate the 
approach empirically. 

A common modular Bayesian inference approach is
to calculate the so-called ``cut posterior'', which is where feedback from misspecified modules
to the parameters of other well-specified modules is completely cut~\citep{Plu2015,liu+g25}. In recent work, \cite{SmiYuNotFra2025}
develop such cut posteriors for copula models, but treat all marginals as a single module. 
However, in practice marginals can exhibit differing levels of misspecification, 
making it attractive to treat them as $d$ separate modules, and
to modulate the influence of each according to the degree of
misspecification.  

SMI \citep{CarNic2020,CarNic2022,FraNot2025}
extends cutting feedback approaches to the case
where the effects of a misspecified module on a joint SMI posterior are modulated continuously 
by an influence parameter, which
interpolates between a cut and the conventional posterior. Here, we specify a completely novel SMI approach for copulas, with a 
separate influence parameter for each marginal distribution. These influence parameters are introduced through a novel extended pseudo likelihood, which is a significant departure from the SMI approach of~\cite{CarNic2020,CarNic2022}. We develop a Bayesian
optimization \citep[BO;~][]{garnett23} approach to learn the influence
parameters of this new semi-modular posterior, which allows some marginals
to be fully cut, some fully uncut, and some partially cut.  Even when searching
for a modular structure where each marginal is fully cut or uncut, the SMI
optimization is an effective continuous relaxation of the discrete search problem that is not tractable even in modest dimensions.

We study the
properties of the SMI posterior both theoretically and empirically. Re-interpreting our SMI posterior as a type of generalized Bayesian posterior (\citealp{bissiri2016general}), we can view the influence parameters in the SMI posterior as akin to the ``learning rates'' required to perform generalized Bayes. However, in contrast to standard generalized Bayes, our theoretical results demonstrate that our SMI posterior depends critically on the value of the influence parameters, which determine the extent of posterior modularity. In general, the learning rates in generalized Bayes posteriors only impacts 
the posterior scale, while the influence parameters in the SMI posterior impact both posterior scale and location. This makes existing results for generalized posteriors inapplicable, so that we  must tailor our theoretical analysis to this novel setting. 
The strong dependence on the influence parameters also suggests that, in contrast to the case of learning rates in generalized Bayes, values can be learned that deliver accurate results, where accuracy is 
measured by an external criterion.
A simulation study illustrates the impact of the influence parameters empirically. Considering a bivariate example, where only one of the marginals is misspecified, we show that tuning the influence parameter to down-weight the impact of the misspecified module improves inference for the copula and the misspecified marginal at the cost of a mild deterioration for the well-specified marginal. In this example, an SMI posterior with a partial cut of the misspecified module is preferred over both fully cut and uncut posteriors.

In a substantive application to U.S. financial data, we estimate a copula model of equity market volatility and the yields on domestic debt graded AAA and BBB by Standard \& Poors. A skew-normal copula captures strongly asymmetric contemporaneous and serial
dependence between the variables. Parametric skewed distributions with heavy tails are used for the marginals, which are preferred over nonparametric ones because the extreme tail behavior is important. However, one or more of these marginals may be misspecified, so that computing inference robust to marginal misspecification is attractive. We show that the SMI posterior differs substantially
from the conventional and fully cut posteriors, with results that are more economically intuitive and consistent with the data. 

The rest of the paper is organized as follows.
Section~\ref{sec:background} gives the necessary background on copula models, variational inference and Bayesian modular inference methods. Section~\ref{sec:genpost} briefly outlines the
cutting feedback approach for copula models of~\cite{SmiYuNotFra2025}, how to generalize it to allow each
marginal to be a separate module, and outlines the proposed SMI approach.    
Efficient variational inference methods to compute the SMI posterior are then given, where Bayesian optimization 
is used to learn the influence parameters.
Section~\ref{sec:theory} explores the properties of our SMI posterior, 
Section~\ref{sec:simulation} presents the simulation study, Section~\ref{sec:yieldseg}~contains the financial application, and Section~\ref{sec:conc} concludes. 

\section{Background}\label{sec:background}
\subsection{Copula models}
%Copulas \citep[e.g.;~][]{Nel2006} give a convenient way to decompose multivariate distributions where the marginal distributions and dependence structure can be specified separately. 
Let $Y=(Y_1,\dots,Y_d)^\top$ be a random vector with joint distribution $F_Y$ and marginal distributions $F_j$, $j=1,\dots,d$. Sklar's theorem \citep{Skl1959} shows that the joint distribution function is
\begin{equation}\label{eq:copula}
    F_Y(y)=C(F_1(y_1),\dots,F_d(y_d)),
\end{equation}
where $C:[0,1]^d\to[0,1]$ is a copula function determining the dependence structure of $Y$. The copula function is unique if all marginals are continuous, and differentiating \eqref{eq:copula} gives the joint density of $Y$,
\begin{align}\label{eq:joint_pdf}
    f_Y(y)=c(F_1(y_1),\dots,F_d(y_d))\prod_{j=1}^d f_j(y_j),
\end{align}
where $c(u)=\frac{\partial}{\partial u}C(u)$ is the ``copula density'', and $f_j(y_j)=\frac{\partial}{\partial y_j}F_j(y_j)$, $j=1,\dots,d$, are the marginal densities. 

A major advantage of copula models is that the marginals and copula function can be specified separately. In many applications parametric families are employed for both, and we write $C(u;\psi)$ for the copula function with parameters $\psi$, and $F_j(y_j;\eta_j)$ for the $j$-th marginal with parameters $\eta_j$. The model parameters are thus $\theta=(\psi^\top,\eta^\top)^\top$, where $\eta=(\eta_1^\top,\dots,\eta_d^\top)^\top$ are the marginal parameters. We write $\mathcal{D}=\{y_1,\dots,y_n\}$ for a set of $n$ independent observations $y_i=(y_{i1},\dots,y_{id})^\top \sim F_Y$. Under the prior $p(\theta)=p(\psi)p(\eta_1)\cdots p(\eta_d)$, the joint posterior is then 
\begin{align*}
   p(\theta\mid\mathcal{D})\propto p(\theta)\prod_{i=1}^n c(F_1(y_{i1};\eta_1),\dots,F_d(y_{id};\eta_d);\psi)\prod_{j=1}^d f_j(y_{ij};\eta_j).
\end{align*}

\subsection{Variational inference}
Variational inference \citep[VI;~][]{BleKucMca2017} is a computationally efficient alternative to exact Bayesian inference computed via Markov chain Monte Carlo (MCMC). The main idea is to approximate the posterior  $p(\theta|\mathcal{D}) \propto p(y|\theta)p(\theta)$ by a distribution $q_\lambda(\theta)$ chosen from a tractable variational family $\mathcal{Q}=\{q_\lambda(\theta)\mid\lambda\in\Lambda\}$ parameterized by $\lambda$.
An optimal value $\lambda^*\in\Lambda$ is commonly obtained by minimizing the reverse Kullback-Leibler (KL) divergence, 
\begin{equation}\label{eq:kl}
\KL(q_\lambda(\theta)\|p(\theta\mid \mathcal{D}))=\mathbb{E}_{q_\lambda(\theta)}\left[\log q_\lambda(\theta)-\log p(\theta\mid \mathcal{D})\right],
\end{equation}
where $\mathbb{E}_{q_\lambda(\theta)}$ denotes expectation with
respect to $q_\lambda(\theta)$.  Minimizing \eqref{eq:kl} is equivalent to maximizing the evidence lower bound \citep[ELBO;~][]{OrmWan2010}
\begin{equation*}
    \mathcal{L}(\lambda)=\mathbb{E}_{q_\lambda(\theta)}\left[\log (p(\theta)p(y\vert\theta))-\log q_\lambda(\theta)\right]\,.
\end{equation*}
A general approach to optimizing the ELBO is to employ stochastic gradient methods. These update an initial value $\lambda^{(0)}$ using an iterative scheme of the form $\lambda^{(m+1)}=\lambda^{(m)}+\rho^{(m)}\circ\widehat{\nabla_\lambda\mathcal{L}\left(\lambda^{(m)}\right)}$ at each step $m$, where $\rho^{(m)}$ is an adaptive vector-valued step size, $\circ$ denotes element-wise multiplication, and $\widehat{\nabla_\lambda\mathcal{L}\left(\lambda^{(m)}\right)}$ is an unbiased estimate of the gradient of $\mathcal{L}(\lambda)$ evaluated at $\lambda^{(m)}$. 

While $\nabla_\lambda\mathcal{L}\left(\lambda^{(m)}\right)$ can be estimated by Monte Carlo estimation of the expectation, variance reduction techniques such as the reparameterization trick \citep{KinWel2014,RezMohWie2014} are often employed to ensure fast convergence. The reparameterization trick assumes that samples $\theta\sim q_\lambda(\theta)$ can be generated as $\theta=f(\varepsilon,\lambda)$ for a deterministic differentiable function $f$ and a random variable $\varepsilon\sim\pi(\varepsilon)$ from a base distribution not depending on $\lambda$. Then,
\begin{align*}
\nabla_\lambda \mathcal{L}(\lambda)
  %  \nabla_\lambda \mathbb{E}_{q_\lambda(\theta)}\left[\log (p(\theta)p(y\vert\theta))-\log q_\lambda(\theta)\right] 
  = \mathbb{E}_{\pi(\varepsilon)}\left[\nabla_{\lambda} \left\{\log (p(f(\varepsilon,\lambda))p(y\vert f(\varepsilon,\lambda)))-\log q_\lambda(f(\varepsilon,\lambda))\right\}\right].
\end{align*}
A single sample from $\pi(\varepsilon)$ in each iteration $m$ is often enough to approximate the ELBO gradient. The choice of step sizes $\rho^{(m)}$ is usually made adaptively, with popular approaches including ADAM \citep{KinBa2014} and ADADELTA \citep{Zei2012}. VI has been used to estimate copula models by~\cite{LoaizaMayaSmith2019} and~\cite{NguyenAusinGaleano2020}.

\subsection{Modular inference under misspecification}\label{sec:modinf}
In this paper we consider Bayesian modular inference methods for
misspecified copula models. Modular inference is used for statistical models which are formed by interacting sub-models that are called ``modules''. Cutting feedback \citep[e.g.~][]{Plu2015,liu+g25} is a modular Bayesian inference method that aims to cut feedback from misspecified modules to well specified ones. It has been considered previously by \citet{SmiYuNotFra2025} for copula models. The current work extends cutting feedback for copulas to semi-modular inference, and we first briefly outline both cutting feedback and semi-modular inference. We discuss these for two modules for clarity, but our later semi-modular extensions treat each marginal distribution in a copula model as its own module, and hence involve many modules. A general approach to cutting feedback with multiple modules is given in \citet{liu+g25}.

Consider a model for data $\mathcal{D}$ with density $p(\mathcal{D}\mid\theta)$ and parameter vector $\theta=(\psi^\top,\eta^\top)^\top$ that can be factorized as 
\begin{equation}\label{eq:two_module_system}
    p(\mathcal{D}\mid\theta) = p_1(\mathcal{D}\mid \psi)p_2(\mathcal{D}\mid \eta,\psi)\,.
\end{equation}
In many cutting feedback applications, the terms $p_1$ and $p_2$ are likelihoods for different data sources for the two modules, but in general they only need to represent different factors in a decomposition of the joint likelihood. We show later that this is the case with copula models. Module~1 comprises $p_1(\mathcal{D}\mid \psi)$ and prior $p(\psi)$, and module~2 of $p_2(\mathcal{D}\mid \eta,\psi)$ and prior $p(\eta\mid\psi)$. 
The main idea of cutting feedback is that module~1 is trusted, but module~2 is not, and we wish to avoid the misspecification in module 2 corrupting inferences about the parameters in module 1.

The joint posterior is $p(\psi,\eta\mid\mathcal{D})=p(\psi\mid\mathcal{D})p(\eta\mid\psi,\mathcal{D})$, where 
%\begin{align*}
%    p(\psi\mid\mathcal{D})&\propto p(\psi)p_1(\mathcal{D}\mid \psi)\Bar{p}_2(\mathcal{D}\mid \psi)\,,\\
%    p(\eta\mid\psi,\mathcal{D})&\propto p(\eta\mid\psi)p_2(\mathcal{D}\mid \eta,\psi)\,.
%\end{align*}
\begin{equation*}
    p(\psi\mid\mathcal{D})\propto p(\psi)p_1(\mathcal{D}\mid \psi)\Bar{p}_2(\mathcal{D}\mid \psi)\;,\mbox{ and }\;
    p(\eta\mid\psi,\mathcal{D})\propto p(\eta\mid\psi)p_2(\mathcal{D}\mid \eta,\psi)\,.
\end{equation*}
The term $\Bar{p}_2(\mathcal{D}\mid \psi)=\int p(\eta\mid\psi)p_2(\mathcal{D}\mid \eta,\psi)d\eta$ is called the feedback term because it represents the feedback of module~2 on the marginal for $\psi$. Cutting feedback is a way to stop misspecification in the second module from corrupting inference for parameters in the trusted module 1. The feedback term is deleted from $p(\eta\mid\psi,\mathcal{D})$ to define the cut posterior 
\begin{align*}
    p_\text{cut}(\psi,\eta\mid\mathcal{D})= p_\text{cut}(\psi\mid\mathcal{D})p(\eta\mid\psi,\mathcal{D})\propto p_\text{cut}(\psi\mid\mathcal{D})p(\eta\mid\psi)p_2(\mathcal{D}\mid \eta,\psi),%\,,\mbox{where}\\ p_\text{cut}(\psi\mid\mathcal{D})&\propto p(\psi)p_1(\mathcal{D}\mid \psi)\,.
\end{align*}
where $p_\text{cut}(\psi\mid\mathcal{D})\propto p(\psi)p_1(\mathcal{D}\mid \psi)$.
\citet{CarNic2020,CarNic2022} extend this idea to ``semi-modular inference'' (SMI), which introduces an influence parameter $\gamma$ to smoothly interpolate between the conventional and cut posteriors. This approach offers several practical advantages. First, even under model misspecification, allowing some information flow from the misspecified to the trusted module can be beneficial; see~\citet{FraNot2025} for a formalization of the bias-variance trade-off involved. Second, in a model with $m>2$ modules, deciding which to cut can be challenging as it involves a search over $2^m$ possible configurations. 
Later we show that relaxing this choice into a continuous search problem over a multivariate influence parameter $\gamma$ simplifies it greatly. 

\citet{CarNic2022} define a SMI posterior using the power posterior
\begin{equation}\label{eq:powerpostCN}
p_{\text{pow},\gamma}(\psi,\Tilde{\eta}\mid\mathcal{D})\propto p(\psi)p_1(\mathcal{D}\mid \psi)p_2(\mathcal{D}\mid \Tilde{\eta},\psi)^\gamma p(\Tilde{\eta}\mid\psi)\,,
\end{equation}
which introduces auxiliary parameters $\tilde \eta$ to define the joint 
\begin{equation*}
    p_{\text{SMI},\gamma}(\psi,\Tilde{\eta},\eta\mid\mathcal{D})\propto p_{\text{pow},\gamma}(\psi,\Tilde{\eta}\mid\mathcal{D})p(\eta\mid\psi)p_2(\mathcal{D}\mid \eta,\psi)\,.
\end{equation*}
Integrating out $\tilde \eta$ gives the SMI posterior 
$p_{\text{SMI},\gamma}(\psi,\eta\mid\mathcal{D})=\int p_{\text{SMI},\gamma}(\psi,\Tilde{\eta},\eta\mid\mathcal{D})d\Tilde{\eta}$.
%\begin{align}
%    p_{\text{SMI},\gamma}(\psi,\Tilde{\eta},\eta\mid\mathcal{D})&=p_{\text{pow},\gamma}(\psi,\Tilde{\eta}\mid\mathcal{D})p(\eta\mid\psi)p_2(\mathcal{D}\mid \eta,\psi) \nonumber \\ 
%    p_{\text{pow},\gamma}(\psi,\Tilde{\eta}\mid\mathcal{D})&\propto p(\psi)p_1(\mathcal{D}\mid \psi)(p_2(\mathcal{D}\mid \psi,\Tilde{\eta}))^\gamma p(\Tilde{\eta}\mid\psi). \label{eq:powpost_carnic}
%\end{align}
Setting $\gamma=1$ results in the conventional posterior,  $\gamma=0$ gives the cut posterior $p_\text{cut}(\psi,\eta\mid\mathcal{D})$, whereas $0<\gamma<1$ interpolates between the two. In this paper, we later propose a variation of SMI posteriors for copula models that is not based on the power posterior.

The original motivation of \cite{CarNic2020} for introducing SMI was that the 
conventional posterior can be preferable to the cut posterior when 
misspecification is only moderate. Their crucial insight was to frame the 
problem in terms of a bias-variance trade-off, with the influence parameter 
managing this trade-off by interpolating between the cut and conventional 
posteriors. \cite{FraNot2025} formalized this intuition for a variant 
of SMI introduced in \cite{ChaNotDroFraSis2023}, and \cite{Nicholls+lwc22} 
develop connections between SMI posteriors and generalized Bayesian 
approaches; related generalized Bayesian treatments of cutting feedback 
appear in \cite{frazier2025cutting} and \cite{Tan+nf25}. Selecting 
influence parameters has been studied in \cite{CarNic2022} and
\cite{battaglia+chlbn25} using amortized variational inference, and we 
also adopt variational methods here to compute SMI posteriors, 
following \cite{YuNotSmi2023}.  

\section{Generalized posteriors for copula models with misspecified marginals}\label{sec:genpost}

\subsection{Cutting feedback in copula models}\label{sec:SmiYuNotFra2025}
\citet{SmiYuNotFra2025} consider copula models as a two module system, where the copula function $C(u;\psi)$ forms one module, and the marginals $F_1(y_1;\eta_1)$, \dots, $F_d(y_d;\eta_d)$ form a second module. They consider a scenario where a researcher is confident in the copula function, but the marginals may be misspecified, and therefore wishes to cut the influence of $\eta$ on $\psi$. They call this a ``Type 2'' cut posterior and define it using a pseudo likelihood of the rank data to bring the likelihood into the form \eqref{eq:two_module_system}. Denote the rank data as $r(\mathcal{D})=\{r(y_{ij}); i=1,\dots,n, j =1,\dots,d\}$, where $r(y_{ij})=\sum_{k=1}^n\indicator{y_{kj}\leq y_{ij}}$ is the rank of $y_{ij}$ within marginal $j$.  Setting 
\begin{align*}
    a_{ij}(\mathcal{D}) = \frac{r(y_{ij})-1}{n+1}\;\;\mbox{ and }\;\;
    b_{ij}(\mathcal{D}) = \frac{r(y_{ij})}{n+1}\,,
\end{align*}
\citet{SmiYuNotFra2025} define the pseudo rank likelihood as
\begin{equation}\label{eq:pl}
    p_\text{pl}(r(\mathcal{D})\mid\psi)= \prod_{i=1}^{n}\Delta_{a_{i1}(\mathcal{D})}^{b_{i1}(\mathcal{D})}\dots \Delta_{a_{id}(\mathcal{D})}^{b_{id}(\mathcal{D})}C(u_i;\psi),
\end{equation}
where for $v=(v_1,v_2,\ldots,v_d)^\top$, 
\begin{equation*}
    \Delta_{a_{j}}^{b_{j}}C(v;\psi)=C(v_{1},\dots,v_{j-1},b_{j},v_{j+1},\dots,v_d)-C(v_{1},\dots,v_{j-1},a_{j},v_{j+1},\dots,v_{d})
\end{equation*}
is a differencing operator over the $j$th element. Equation~\eqref{eq:pl} is
similar to replacing each marginal in the copula model with its empirical distribution function to remove dependence on the marginals, but accounting 
for the discrete nature of the ranks.  By setting 
$p_1(\mathcal{D}|\psi)=p_\text{pl}(r(\mathcal{D})\mid\psi)$ and $p_2(\mathcal{D}\mid \eta,\psi)=p(\mathcal{D}\mid \eta,\psi)/p_\text{pl}(r(\mathcal{D})\mid\psi)$ at \eqref{eq:two_module_system}, then $p_\text{cut}(\psi\mid r(\mathcal{D}))\propto p_\text{pl}(r(\mathcal{D})\mid\psi)p(\psi)$ and the cut posterior is 
\begin{equation}\label{eq:type2cut}
    p_\text{cut}(\psi,\eta\mid\mathcal{D})=p_\text{cut}(\psi\mid r(\mathcal{D}))p(\eta\mid \psi, \mathcal{D})\,.
\end{equation}

Evaluating~\eqref{eq:pl} is difficult computationally.
To avoid this the authors introduce auxiliary variables $u=(u_1^\top,\ldots,u_n^\top)^\top$, with $u_i=(u_{i1},\ldots,u_{id})^\top$, and use the more tractable extended pseudo likelihood
\begin{equation}\label{eq:epl}
    p_\text{epl}(r(\mathcal{D}),u\mid\psi)=\prod_{i=1}^n c(u_i;\psi)\prod_{j=1}^d \indicator{a_{ij}(\mathcal{D})\leq u_{ij}\leq b_{ij}(\mathcal{D})}\,.
\end{equation}
Integrating \eqref{eq:epl} over $u$ returns the pseudo rank likelihood. They then compute the cut posterior augmented with $u$ given by $p_\text{cut}(\psi,u,\eta\mid\mathcal{D})=p_\text{cut}(\psi,u\mid r(\mathcal{D}))p(\eta\mid \psi, \mathcal{D})$, where $p_\text{cut}(\psi,u\mid r(\mathcal{D}))\propto p_\text{epl}(r(\mathcal{D}),u\mid\psi)p(\psi)$. The marginal in $\psi,\eta$ is the desired cut posterior at~\eqref{eq:type2cut}. 

\subsection{Multiple cuts of marginals}\label{sec:multicut}
The approach of~\cite{SmiYuNotFra2025} allows for only a single cut from one module for all marginals to the copula. However, in many applications some marginal distributions might be more trusted than others, and a researcher may wish to separately specify whether or not to cut each marginal. To do this, we extend their framework as follows.

Let $\delta=(\delta_1,\dots,\delta_d)^\top\in\{0,1\}^d$ be a vector of binary indicators such that $\delta_j=0$ indicates that the $j$th marginal is cut and $\delta_j=1$ that the $j$th marginal is uncut. The set $D(\delta)=\{j; \delta_j=0\}=\{j_1,\ldots,j_r\}$ includes the indices of the marginal distributions that are cut, and $D^c(\delta)=\{j; \delta_j=1\}=\{j_{r+1},\ldots,j_d\}$ those that remain uncut.
Then we can partition the marginal parameters $\eta$ into cut $\eta_{D(\delta)}=(\eta_{j_{1}}^\top,\ldots,\eta_{j_r}^\top)^\top$ and uncut $\eta_{D^c(\delta)}=(\eta_{j_{r+1}}^\top,\ldots,\eta_{j_d}^\top)^\top$ components. 

To define the cut posterior we replace the extended pseudo likelihood at~\eqref{eq:epl} with the following extension:
\begin{align}\label{eq:epl_multiple}
    p_{\text{mpl},\delta}(\mathcal{D},u\mid\psi,\eta_{D^c(\delta)})&=\prod_{i=1}^n c(u_i;\psi)\prod_{j\in D(\delta)} \indicator{a_{ij}(\mathcal{D})\leq u_{ij}\leq b_{ij} (\mathcal{D})}\times\nonumber\\
    &\prod_{j\in D^c(\delta)}\indicator{u_{ij}=F_j(y_{ij};\eta_j)}\,.
\end{align}
Integrating over $u$ gives a pseudo likelihood where the marginals for $D(\delta)$ only use their rank data; see Supporting Information~A. This resulting pseudo likelihood is a mixed density, so that we call it a mixed pseudo likelihood. 
When $\delta=(0,\ldots,0)^\top$, $D^c(\delta)=\emptyset$ and~\eqref{eq:epl_multiple} is exactly equal to~\eqref{eq:epl}, whereas when $\delta=(1,\ldots,1)^\top$, $D(\delta)=\emptyset$ and~\eqref{eq:epl_multiple} is the copula density evaluated at points $u_{ij}=F_j(y_{ij};\eta_j)$ for all $i,j$.

For a fixed vector $\delta$, the joint cut posterior (augmented with $u$) is given by setting 
\begin{align}
p_{\text{cut},\delta}(\psi,u,\eta\mid\mathcal{D})=&p_{\text{mpl},\delta}(\mathcal{D},u\mid\psi,\eta_{D^c(\delta)})p(\psi,\eta_{D^c(\delta)})\times \nonumber\\&\left\{\prod_{i=1}^n \prod_{j\in D^c(\delta)}f_j(y_{ij};\eta_j)\right\}p(\eta_{D(\delta)}\mid \psi,\eta_{D^c(\delta)},\mathcal{D}).\label{eq:jntcutmulti}
\end{align}
Here, the conventional conditional posterior
$p(\eta_{D(\delta)}\mid \psi,\eta_{D^c(\delta)},u,\mathcal{D})=p(\eta_{D(\delta)}\mid \psi,\eta_{D^c(\delta)},\mathcal{D})$ is unaffected by $u$.
The marginal of~\eqref{eq:jntcutmulti} in $(\psi,\eta)$ is the required cut posterior.  

This approach is a natural extension to Section~\ref{sec:SmiYuNotFra2025}, and it agrees with the approach of \citet{SmiYuNotFra2025} when $\delta=(0,\dots,0)^\top$.
For $\delta=(1,\dots,1)^\top$, $D(\delta)=\emptyset$, so that from~\eqref{eq:jntcutmulti}, 
\begin{equation*}p_{\text{cut},1}(\psi,\eta\mid\mathcal{D}) =  \int   p_{\text{cut},1}(\psi,u,\eta\mid\mathcal{D}) du = p(\psi,\eta|\mathcal{D})\,,
\end{equation*}
is the conventional posterior. 

However, conducting inference using this framework has two challenges.
First, selecting the optimal cut $\delta$ can be difficult as this leads to a discrete search problem over $2^d$ potential cuts. Second, the degree of misspecification in each marginal can vary, and a full cut of each misspecified marginal may be disadvantageous depending on the objective of the analysis. 
%First, evaluating~\eqref{eq:jntcutmulti} can be difficult computationally. Second, selecting the optimal cut $\delta$ can be difficult as this leads to a discrete search problem over $2^d$ potential cuts. 
To overcome both challenges, we instead propose a novel way of controlling the information flow between marginals and copula parameters in a continuous manner which we describe below. It allows each marginal
to be either fully cut, fully uncut, or partially cut. By doing so, it produces a continuous relaxation of the hard cut, where the search is no longer over a discrete space of dimension $2^d$, but over the hypercube $[0,1]^d$.

\subsection{Our SMI approach}\label{sec:smi_copula}
Let $\gamma=(\gamma_1,\dots,\gamma_d)^\top$ denote a vector of parameters, $0\leq\gamma_j\leq1$, where $\gamma_j$ controls the influence of the $j$th marginal parameter $\eta_j$ on $\psi$. When $\gamma_j=0$ the influence of the $j$th marginal is fully cut from the copula module. Then, we propose using
the following SMI extended likelihood   
\begin{equation}\label{eq:epl_cont}
    p_{\text{SMI},\gamma}(\mathcal{D},u\mid\psi,\eta)=\prod_{i=1}^n c(u_i;\psi)\prod_{j=1}^d \indicator{a_{ij}(\gamma_j,\eta_j,\mathcal{D})\leq u_{ij}\leq b_{ij}(\gamma_j,\eta_j,\mathcal{D})},
\end{equation}
where the bounds interpolate between the ranks and  $F_j(y_{ij},\eta_j)$:
\begin{align*}
    a_{ij}(\gamma_j,\eta_j,\mathcal{D}) &= \gamma_jF_j(y_{ij};\eta_j)+(1-\gamma_j)\frac{r(y_{ij})-1}{n+1},\\
    b_{ij}(\gamma_j,\eta_j,\mathcal{D}) &= \gamma_jF_j(y_{ij};\eta_j)+(1-\gamma_j)\frac{r(y_{ij})}{n+1}.
\end{align*}
Equation~\eqref{eq:epl_cont} is a continuous relaxation of \eqref{eq:epl_multiple} and exactly equals $p_{\text{mpl},\gamma}$ when $\gamma=\delta\in\{0,1\}^d$. 
In particular, it recovers the extended pseudo likelihood at~\eqref{eq:epl} when $\gamma=(0,\dots,0)^\top$, and when $\gamma=(1,\dots,1)^\top$
\begin{align*}
    p_{\text{SMI},1}(\mathcal{D},u\mid\psi,\eta)=\prod_{i=1}^n c\left(\left(F_1(y_{i1};\eta_1),\dots,F_d(y_{id};\eta_d)\right)\mid\psi\right)
\end{align*}
we recover the copula density term from the parametric likelihood. The full likelihood \eqref{eq:copula} can therefore be written as
\begin{align*}
    p(\mathcal{D}\mid\psi,\eta)&=\prod_{i=1}^n c\left(\left(F_1(y_{i1};\eta_1),\dots,F_d(y_{id};\eta_d)\right)\mid\psi\right)\prod_{j=1}^d f_j(y_{ij};\eta_j)\\&=p_{\text{SMI},1}(\mathcal{D},u\mid\psi,\eta)\prod_{i=1}^n\prod_{j=1}^d f_j(y_{ij};\eta_j).
\end{align*}
Based on this observation we now define a generalized Bayes posterior for fixed $\gamma\in[0,1]^d$ as follows. Similar to the SMI framework of \citet{CarNic2022} we introduce an additional auxiliary parameter $\widetilde{\eta}=(\widetilde{\eta}_1^\top,\ldots,\widetilde{\eta}_d^\top)^\top$ and set 
\begin{equation}\label{eq:powpost}
    p_{\text{SMI},\gamma}(\psi,u,\widetilde{\eta}\mid \mathcal{D})\propto p(\psi,\widetilde{\eta})p_{\text{SMI},\gamma}(\mathcal{D},u\mid\psi,\widetilde{\eta})\prod_{i=1}^n\prod_{j=1}^df_j(y_{ij}\mid\widetilde{\eta}_j).
\end{equation}
This is a posterior derived from a tempered likelihood, where $\gamma_j$ controls the influence of $\widetilde{\eta}_j$ on $\psi$. Combining~\eqref{eq:powpost} with the conditional posterior $p(\eta\mid\psi,\mathcal{D})$ defines an SMI posterior
distribution augmented with both $u$ and $\widetilde{\eta}$: 
\begin{align}\label{eq:SMI_full}
    p_{\text{SMI},\gamma}(\psi,u,\eta,\widetilde{\eta}\mid \mathcal{D})=p_{\text{SMI},\gamma}(\psi,u,\widetilde{\eta}\mid\mathcal{D})p(\eta\mid\psi,\mathcal{D}).
\end{align}
Information from the marginals can influence $\psi$ through the axillary parameter $\widetilde\eta$, which can influence the bounds $a_{ij}(\gamma_j,\widetilde{\eta}_j,\mathcal{D})$ and $b_{ij}(\gamma_j,\widetilde{\eta}_j,\mathcal{D})$ in $p_{\text{SMI},\gamma}(\psi,u,\widetilde{\eta}\mid\mathcal{D})$, to an extent controlled by $\gamma$. 
Integration over the auxiliary parameters $u$ and $\widetilde{\eta}$ in \eqref{eq:SMI_full} gives the SMI posterior of the copula and marginal parameters: 
\begin{equation}
p_{\text{SMI},\gamma}(\eta,\psi\mid\mathcal{D})=\int\int p_{\text{SMI},\gamma}(\psi,u,\eta,\widetilde{\eta}\mid \mathcal{D}) du d\widetilde{\eta}\,.\label{eq:smijntpost}
\end{equation}

The pseudo posterior at \eqref{eq:powpost} is similar to the power posterior at~\eqref{eq:powerpostCN} in the sense that it uses a $\gamma$-modified likelihood controlling the influence of auxiliary parameters $\widetilde{\eta}$ on the parameter $\psi$ of the trusted module. However, this novel approach to SMI in copula models differs substantively from the SMI framework in \citet{CarNic2022}. Under the power posterior approach of \cite{CarNic2022}, the untrusted module for $\eta_j$ would reduce to the prior for $\gamma_j=0$, whereas in our case the likelihood term for the $j$th marginal, $f_j(y\mid\widetilde{\eta}_j)$, is unaffected by $\gamma_j$. That is, even if $\widetilde{\eta}_j$ is fully cut from the copula module, $\widetilde{\eta}_j$ is still informed by the data. This difference is crucial.  Our structure ensures that, for each $j$, $F_j(y_{ij};\widetilde{\eta}_j)$ is fully informed by the marginal likelihood for all $\gamma$, and the bounds $a_{ij}(\gamma_j,\widetilde{\eta}_j,\mathcal{D})$ and $b_{ij}(\gamma_j,\widetilde{\eta}_j,\mathcal{D})$ in \eqref{eq:epl_cont} are a mixture between the (parameter independent) rank data and the parametric marginals, with mixture weight given by $\gamma_j$.

\subsection{Efficient variational inference}\label{sec:computation}

Exact sampling from $p_{\text{SMI},\gamma}(\psi,u,\eta,\widetilde{\eta}\mid \mathcal{D})$ lends itself to a two step approach, where in a first step a sample from $p_{\text{SMI},\gamma}(\psi,u,\widetilde{\eta}\mid \mathcal{D})$ is generated and then in a second step $\eta\sim p(\eta\mid\psi,\mathcal{D})$ is drawn. However, this is computationally prohibitive as it would not only involve two nested MCMC samplers, but also the need to derive an efficient sampler for the high-dimensional density \eqref{eq:powpost}. 
Therefore, we propose to use an efficient variational approximation instead. VI has been previously employed to derive cut and semi-modular posteriors \citep{YuNotSmi2023,CarNic2022}, and here we combine a structured Gaussian variational family with ``stop gradient'' operators \citep{CarNic2022} that allow updating of all variational parameters jointly.

All model parameters $\theta=(\psi^\top,\eta^\top)^\top$ are either unconstrained or transformed monotonically to be so. Similarly, let $\Phi$ denote the standard normal distribution function, then each $u_{ij}$ 
%$u_{ij}=(b_{ij}-a_{ij})\Phi(z_{ij})+a_{ij}$ 
is transformed to an unconstrained value $z_{ij}=\Phi^{-1}\left((u_{ij}-a_{ij})/(b_{ij}-a_{ij})\right)$ for all $i,j$.
A priori, $u_{ij}\sim \UD(a_{ij}(\gamma_i,\eta_j,\mathcal{D}),b_{ij}(\gamma_i,\eta_j,\mathcal{D}))$, so that $u$ depends on $\eta$, whereas each $z_{ij}\sim \ND(0,1)$ is independent of $\eta$ a priori. Thus $z=(z_1^\top,\ldots,z_n^\top)^\top$, with $z_i=(z_{i1},\ldots,z_{id})^\top$, is an attractive
re-parameterization that simplifies the geometry of the prior and posterior. 

The variational approximation of $p_{\text{SMI},\gamma}(\psi,u,\eta,\widetilde{\eta}\mid \mathcal{D})$ is assumed to be of the form
\begin{align*}
q_\lambda(\psi,z,\eta,\tilde{\eta})=q_{\lambda}(\psi)q_{\lambda}(z)q_\lambda(\eta\mid\psi)q_\lambda(\tilde{\eta}\mid\psi),
\end{align*}
where $q_{\lambda}(\psi), q_\lambda(\eta\mid\psi)$, and $q_\lambda(\tilde{\eta}\mid\psi)$ are multivariate Gaussian distributions parameterized as follows:
\begin{align*}
    q_{\lambda}(\psi) &= \varphi(\psi;\mu_\psi,T_\psi^{-\top}T_\psi^{-1})\\
    q_\lambda(\eta\mid\psi) &= \varphi(\eta;\mu_\eta-T_\eta^{-\top}T_{\psi\eta}^\top(\psi-\mu_\psi);T_\eta^{-\top}T_\eta^{-1})\\
    q_\lambda(\tilde{\eta}\mid\psi) &= \varphi(\tilde{\eta};\mu_{\tilde{\eta}}-T_{\tilde{\eta}}^{-\top}T_{\psi\tilde{\eta}}^\top(\psi-\mu_\psi);T_{\tilde{\eta}}^{-\top}T_{\tilde{\eta}}^{-1}),
\end{align*}
where $T_\psi$, $T_\eta$, and $T_{\tilde\eta}$ are lower triangular matrices  with positive diagonals, and $T_{\psi\tilde{\eta}}$, $T_{\psi\eta}$ are unrestricted matrices. This parameterization is chosen so that $q_\lambda(\psi,\eta,\tilde{\eta})$ is jointly Gaussian with $\eta$ and $\tilde{\eta}$ being independent conditional on $\psi$ \citep{TanNot2018}. Following \citet{LoaizaMayaSmith2019}, $q_\lambda(z)=\prod q_\lambda(z_{ij})$ are independent Gaussian approximations with $q_\lambda(z_{ij})=\varphi(z_{ij},\zeta_{ij},\omega_{ij}^2)$. The full vector of variational parameters to be optimized is thus $$\lambda=\left(\zeta^\top,\omega^\top,\mu_\psi^\top,\mu_\eta^\top,\mu_{\tilde{\eta}}^\top,\text{vec}(T_\psi),\text{vec}(T_\eta),\text{vec}(T_{\tilde{\eta}}),\text{vech}(T_{\psi\eta}),\text{vech}(T_{\psi\tilde{\eta}})\right)^\top,$$
where $\text{vec}(A)$ and $\text{vech}(A)$ denote the vectorization and half-vectorization of a matrix $A$, respectively. 

Following~\cite{YuNotSmi2023}, the variational parameters $\lambda$ can be learned in two VI steps. First, the sub-vector parameterizing $q_\lambda(\psi,z,\tilde{\eta})$ approximating $p_{\text{SMI},\gamma}(\psi,u,\widetilde{\eta}\mid \mathcal{D})$ can be learned. Then, in a second step the variational parameters $(\mu_{\eta}^\top,\text{vech}(T_{\eta})^\top,\text{vec}(T_{\psi,\eta})^\top)^\top$ can be trained while keeping the remaining parameters fixed. %This two step approach ensures that information flow from $\eta$ onto $\psi$ is fully stopped. 
However, this two step approach involves two separate stochastic gradient optimizations. Instead, we update the full vector of variational parameters $\lambda$ in a single joint step enabling end-to-end training applying stop gradient operators as now described. 
From a computational point of view, a stop gradient operator prevents gradients from flowing backwards during automatic differentiation.%, and has been popularized by the implementation in TensorFlow~\citep{tensorflow}.

Denote a copy of $\lambda$ which takes the same value as $\lambda$ but is treated as a constant when deriving the gradient $\nabla_\lambda$ as 
``$\cancel{\lambda}$''. To employ the reparameterization trick, a draw from $q_\lambda(\psi,z,\eta,\tilde{\eta})$ is obtained by transforming independent standard normal draws as follows:
\begin{itemize}[label={}]
\item Draw $\varepsilon_{ij}^{(z)}\sim\ND(0,1)$ and set $z_{ij}=\zeta_{ij}+\omega_{ij}\varepsilon_{ij}^{(z)}$ for $i=1,\dots,n$, $j=1,\dots,d$.
\item  Draw $\varepsilon^{(\psi)}\sim\ND(0,I)$ and set $\psi=\mu_\psi+T_\psi^{-\top}\varepsilon^{(\psi)}$.
\item Draw $\varepsilon^{(\eta)}\sim\ND(0,I)$ and set $\eta=\mu_\eta-T_\eta^{-\top}T_{\psi\eta}^\top(\cancel{\psi}-\cancel{\mu_\psi})+T_\eta^{-\top}\varepsilon^{(\eta)}$.
\item Draw $\varepsilon^{(\tilde{\eta})}\sim\ND(0,I)$ and set $\tilde{\eta}=\mu_{\tilde{\eta}}-T_{\tilde{\eta}}^{-\top}T_{\psi\tilde{\eta}}^\top(\psi-\mu_\psi)+T_{\tilde{\eta}}^{-\top}\varepsilon^{(\tilde{\eta})}$.
\end{itemize}

Writing $h_\gamma(\psi,u,\widetilde{\eta})$ for the right hand side in \eqref{eq:powpost}, we note that $h_\gamma(\psi,u,\widetilde{\eta})$ is the kernel of $p_{\text{SMI},\gamma}(\psi,u,\widetilde{\eta}\mid\mathcal{D})$, and $h_1(\cancel{\psi},\cancel{u},\eta)$ is the kernel of the conventional conditional posterior $p(\eta\mid\psi,\mathcal{D})$. Hence, the ELBO for the variational optimization can be expressed as 
\begin{equation} \label{eq:elbo_stopgradient}
    \mathcal{L}(\lambda)=\mathbb{E}_{q_\lambda}\left[\log h_\gamma(\psi,u,\tilde{\eta})+\log h_1(\cancel{\psi},\cancel{u},\eta)-\log q_\lambda(\psi,z,\eta,\tilde{\eta})\right].
\end{equation}

An unbiased estimate of the gradient of $\mathcal{L}(\lambda)$ is thus given as $\widehat{\nabla_\lambda\mathcal{L}\left(\lambda\right)}=\nabla_\lambda \left(\log h_\gamma(\psi,u,\tilde{\eta})+\log h_1(\cancel{\psi},\cancel{u},\eta)-\log q_\lambda(\psi,z,\eta,\tilde{\eta})\right)$. Due to the way the stop-gradient operator is employed in generating draws from $q_\lambda$ and in the evaluation of \eqref{eq:elbo_stopgradient} this updating scheme is equivalent to the two-step approach described above. 

\subsection{Selecting the influence parameter}
So far, we have considered $\gamma$ as a fixed vector. 
However, selecting $\gamma$ is challenging, with similar issues also being encountered for the power posterior in the original SMI of~\cite{CarNic2022}. 
The theoretical results presented in Section~\ref{sec:theory} demonstrate that the (asymptotic) behaviour of the SMI posterior depends directly on $\gamma$.
Section~\ref{sec:simulation} gives an empirical example, in which selecting $\gamma$ to correctly match the misspecification of the marginals in the data generating process does not necessarily lead to an optimal joint SMI posterior. 

Each influence parameter $\gamma_j$ only directly controls the influence of the $j$th marginal on the copula function module, and its influence on the other marginals is indirect. 
%with dependence between parameters of different marginals only indirectly controlled through $\gamma$. 
For this reason, we suggest choosing the optimal vector $\gamma^*$ using an external utility function $u(\gamma)$. 
The utility function may depend on additional data sources not considered for training and is largely determined on the objectives of the empirical application. 
For example, if the goal is prediction, $u(\gamma)$ can be a predictive metric evaluating the sharpness and precision of the copula-based forecasting model. 
Alternatively, if the main interest is estimation of the dependence structure, $u(\gamma)$ can be the expected log-copula likelihood.

%After specifying $u(\gamma)$, the optimal vector of influence parameters is given as $\gamma^\ast=\arg\max_\gamma u(\gamma)$. 
Even for moderate $d$, optimizing $\gamma$ directly is infeasible, and so we propose to use BO \citep{garnett23}.
BO is a commonly employed strategy for efficiently optimizing black-box functions lacking gradients. BO builds a probabilistic surrogate of the objective $u(\cdot)$ by training a Gaussian Process on observed values and uses that surrogate to choose the most informative next point to evaluate balancing both prediction and learned uncertainty. Standard software for BO can be readily combined with our variational framework, and we use BoTorch~\citep{botorch} in our empirical work.

\section{Theoretical behavior of SMI copulas} \label{sec:theory}
This section discusses the theoretical behavior of the SMI posterior $p_\smi(\eta,\psi\mid\mathcal{D})$ developed in Section \ref{sec:smi_copula}. While the discussion pertains to copula models as presented here, many of the points raised
apply more generally to SMI posteriors constructed in the fashion of~\cite{CarNic2020}.

 \subsection{$\gamma$-dependent concentration results}
 Our SMPs dependence on $\gamma$ ultimately implies that the point onto which the posterior concentrates must be viewed as being $\gamma$-dependent. This result suggests that, unlike the usual learning rate encountered in generalized Bayesian inference,  the value of $\gamma$ directly influences not just posterior width but location (i.e., concentration) as well. This suggests that one should optimize over $\gamma$ to determine which values deliver accurate inferences and predictions. 

 Given that the SMI posterior contains the fully cut posterior at \eqref{eq:type2cut} as a special case, $\gamma_j=0$ for all $j$, and given that the theoretical behavior of standard posteriors, $\gamma_j=1$ for all $j$, are well-known, we restrict our theoretical analysis to values where $\gamma_j\in[0,1]$ and where $\gamma_j\not\in\{0,1\}$ for at least some $j$. 

Define the joint vector of parameters $\varphi=(\psi^\top,\widetilde{\eta}^\top)^\top\in\Psi\times\mathcal{E}$, and the following negative SMI ``log-pseudo-likelihood'' and its corresponding limit counterpart 
$$
M_n(\varphi;\gamma):=-\log \int p_{\smi,\gamma}(\mathcal{D},u\mid \psi,\widetilde\eta)\dt u,\quad \mathcal{M}(\varphi;\gamma):=\lim_{n\rightarrow\infty}\E [M_n(\varphi;\gamma)/n].
$$Further, define the $\gamma$-dependent pseudo-true value as
$$
\varphi_0(\gamma):=\argmin_{\varphi\in\Psi\times\mathcal{E}}\mathcal{M}(\varphi;\gamma).
$$Likewise, for 
$
\ell_n(\eta,\psi)=\prod_ic(u_{i1},\dots,u_{id};\psi)\prod_jf(y_{ij}\mid\eta)$, and $\mathcal{L}_{\infty}(\eta,\psi)=\lim_n\E[\ell_n(\eta,\psi)]/n
$ define $$
\eta_0(\gamma):=\argmin_{\eta\in\mathcal{E}}\mathcal{L}_\infty\{\eta,\psi_0(\gamma)\}.$$ 

%Since $p_{\smi,\gamma}(\varphi\mid \mathcal{D})$  can be represented as a generalized posterior based on the loss $M_n^\gamma(\varphi)$, whose corresponding population optimizer $\varphi_0(\gamma)$ changes with $\gamma$, it is clear that the theoretical behavior of $p_{\smi,\gamma}(\varphi\mid \mathcal{D})$ will depend on the value of $\gamma$. T
The following result clarifies the SMI posteriors dependence on $\gamma$: posterior concentration occurs at the standard rate but towards a $\gamma$-dependent quantity. 
\begin{theorem}\label{thm:phi_conc}
Under Assumptions 1-4 in Supporting Information~B, for a positive sequence $r_n\rightarrow0$, $M_n$ large enough, some $C>0$ and $\alpha>0$, with probability converging to one,
$$
\int \mathds{1}\{\|\psi-\psi_0(\gamma)\|>CM_n^\alpha r_n^\alpha\}p_{\smi,\gamma}(\psi\mid\mathcal{D})\dt\psi\rightarrow0
$$and 
$$
\int \int \mathds{1}\{\|\eta-\eta_0(\gamma)\|>CM_n^\alpha r_n^\alpha\}p_{\smi,\gamma}(\eta,\psi\mid\mathcal{D})\dt\psi\dt\eta\rightarrow0.
$$
\end{theorem}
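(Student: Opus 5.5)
The plan is to treat the first stage of the SMI posterior as a generalized (Gibbs) posterior and prove concentration for the joint vector $\varphi=(\psi^\top,\widetilde\eta^\top)^\top$. The second statement then follows by propagating this through the conditional posterior $p(\eta\mid\psi,\mathcal{D})$.

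Integrating $u$ out of \eqref{eq:powpost} gives
\[
p_{\smi,\gamma}(\varphi\mid\mathcal{D})\propto p(\varphi)\exp\{-M_n(\varphi;\gamma)-\textstyle\sum_{i,j}\log f_j(y_{ij}\mid\widetilde\eta_j)^{-1}\}.
\]
So it is a Gibbs posterior with a loss $Q_n(\varphi;\gamma)$ whose normalized limit is $\mathcal{M}(\varphi;\gamma)$ plus the marginal log-score term. In the theorem the pseudo-true value $\varphi_0(\gamma)$ is defined through $\mathcal{M}$ alone. I would read it as the minimizer of the full limiting criterion, with Assumptions 1--4 guaranteeing identifiability and uniqueness, and I would check whether this is what the supplement intends.

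For this generalized posterior I would run the standard Ghosal--Ghosh--van der Vaart / Gibbs-posterior argument. The proof splits into a denominator bound and a numerator bound.

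\emph{Denominator.} Use a local quadratic expansion of $Q_n$ around $\varphi_0(\gamma)$, which needs differentiability in $\varphi$ of the rank/parametric bounds $a_{ij}(\gamma_j,\widetilde\eta_j,\mathcal{D})$ and $b_{ij}(\gamma_j,\widetilde\eta_j,\mathcal{D})$ and a smooth copula density. Combined with prior positivity near $\varphi_0(\gamma)$, this shows that with probability tending to one
\[
\int \exp\{-[Q_n(\varphi)-Q_n(\varphi_0)]\}p(\varphi)\,d\varphi\gtrsim e^{-Cnr_n^2}.
\]

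\emph{Numerator.} A uniform law of large numbers, $\sup_\varphi|Q_n/n-\mathcal{Q}|=o_p(1)$, together with a well-separated minimum of the form $\mathcal{Q}(\varphi)-\mathcal{Q}(\varphi_0)\ge c\|\varphi-\varphi_0\|^{1/\alpha}$, bounds the numerator on the set $\{\|\varphi-\varphi_0\|>CM_n^\alpha r_n^\alpha\}$ by $e^{-cnM_nr_n^{2}}$. The exponent $\alpha$ in the statement comes from this growth condition.

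The main obstacle is the uniform convergence of $M_n(\varphi;\gamma)/n$. Here $M_n$ is minus the log of an integral of the copula density over boxes that mix empirical ranks and parametric CDFs, so the summands are dependent through the ranks. I would handle this in three steps:
\begin{enumerate}
\item Replace $r(y_{ij})/(n+1)$ by the true marginal CDF $F^0_j(y_{ij})$, using $\sup|\hat F_{n,j}-F^0_j|=O_p(n^{-1/2})$ (Glivenko--Cantelli/DKW).
\item Bound the resulting error via Lipschitz continuity of $C(\cdot;\psi)$ in its arguments. The boxes have width $(1-\gamma_j)/(n+1)$, so $\log$ of the box probability is $\log c$ at the box centre plus $\log$ of the box volume, up to a uniformly small error.
\item Apply a Glivenko--Cantelli argument over the parametric class indexed by $\varphi$ to the resulting i.i.d. criterion.
\end{enumerate}
The $\gamma$-dependence enters only through the centre $\gamma_jF_j(y;\widetilde\eta_j)+(1-\gamma_j)F^0_j(y)$. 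This is why the limit, and hence $\varphi_0(\gamma)$, moves with $\gamma$.

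The $\psi$-marginal statement is immediate from concentration of $\varphi$. For $\eta$, I would write
\[
p_{\smi,\gamma}(\eta,\psi\mid\mathcal{D})=p_{\smi,\gamma}(\psi\mid\mathcal{D})\,p(\eta\mid\psi,\mathcal{D}).
\]
The conditional $p(\eta\mid\psi,\mathcal{D})$ is a standard (misspecified) posterior with $\psi$ fixed. Repeating the argument above uniformly over $\psi$ in a shrinking neighbourhood of $\psi_0(\gamma)$ shows that it concentrates on $\operatorname{argmin}_\eta\mathcal{L}_\infty(\eta,\psi)$. Continuity of this argmin map in $\psi$ (implicit function theorem under the Hessian conditions) places the mass near $\eta_0(\gamma)$. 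Integrating against the first-stage concentration, and splitting the integral over $\|\psi-\psi_0(\gamma)\|\le CM_n^\alpha r_n^\alpha$ and its complement, gives the second display.
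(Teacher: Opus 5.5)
The overall architecture matches the paper. You treat the $(\psi,\widetilde\eta)$-posterior as a Gibbs posterior, then split the $\eta$-integral over $\|\psi-\psi_0(\gamma)\|\le CM_n^\alpha r_n^\alpha$ and its complement. You are also right that $\varphi_0(\gamma)$ must be the minimizer of the full loss $\MD(\varphi;\gamma)$, including the marginal log-score, as in Assumption 2 of the supplement.

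\textbf{The gap is in the rate.} On the shrinking set $\{\|\varphi-\varphi_0(\gamma)\|>CM_n^\alpha r_n^\alpha\}$, Assumption 2(ii)--(iii) only gives a population gap $\MD(\varphi)-\MD(\varphi_0)>M_nr_n$. So the negative exponent in your numerator is only guaranteed to be at least $n\{M_nr_n-2\sup_\varphi|\MD_n(\varphi)-\MD(\varphi)|\}$.
\begin{itemize}
\item The $o_p(1)$ uniform law of large numbers that your Glivenko--Cantelli step delivers cannot make this quantity positive, so your numerator bound does not follow. You need $\sup_\varphi|\MD_n-\MD|=O_p(r_n)$. That is exactly Assumption 3, and it is what defines $r_n$ in the statement.
\item In your sketch $r_n$ is never pinned down. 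It is used at a parameter-space scale in the denominator ($e^{-Cnr_n^2}$) and at a loss scale in the numerator.
\item Your denominator bound rests on a local quadratic expansion of $-\log\Delta_n(\widetilde\eta,\psi;\gamma)$ with a $\sqrt n$-bounded, rank-dependent score. Nothing in Assumptions 1--4 provides this. Assumption 4 (prior mass at least $e^{-n\epsilon}$ on $\mathcal{A}_\epsilon$) is there precisely to replace it.
\item Your ULLN sketch is unnecessary given Assumption 3, and its second step is flawed. The DKW shift $O_p(n^{-1/2})$ is far larger than the box width $(1-\gamma_j)/(n+1)$, so Lipschitz continuity of $C$ says nothing about the log box probability. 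The density-level replacement would need $\log c$ to be Lipschitz in $u$ up to the boundary of $[0,1]^d$, which fails for common copulas.
\end{itemize}

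\textbf{How to repair it, and how the paper proceeds.} If you substitute Assumptions 3 and 4, your numerator/denominator route does close, at scale $nr_n$ rather than $nr_n^2$. On $\mathcal{A}_{r_n}$ the normalizer is at least $e^{-nr_n\{2+O_p(1)\}}$, and the numerator on the complement is at most $e^{-nr_n\{M_n-O_p(1)\}}$. The ratio then vanishes as $M_n\to\infty$, using $nr_n\ge1$.

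The paper never bounds the normalizer.
\begin{itemize}
\item It applies Markov's inequality to $\int\{\MD(\varphi)-\MD(\varphi_0)\}q_n(\varphi)\dt\varphi$ and peels off $2\sup_\varphi|\MD-\MD_n|$ via Assumption 3.
\item It bounds $\E\int\{\MD_n(\varphi)-\MD_n(\varphi_0)\}q_n(\varphi)\dt\varphi\le 2\epsilon$ by testing the Gibbs variational characterization of $q_n$ at the prior restricted to $\mathcal{A}_\epsilon$.
\item It takes $\epsilon=r_n$, transfers to $\|\varphi-\varphi_0\|$ with Assumption 2, and marginalizes to $\psi$. No smoothness is needed anywhere.
\end{itemize}

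For $\eta$, your route is a valid and more careful alternative to the paper's. The paper replaces $\psi$ by $\psi_0$ through weak convergence and then reruns the argument for $p(\eta\mid\psi_0,\mathcal{D})$. You instead use concentration uniform in $\psi$ plus a map $\psi\mapsto\operatorname{argmin}_\eta\mathcal{L}_\infty(\eta,\psi)$, but that map must be locally Lipschitz. Continuity alone would not give the radius $CM_n^\alpha r_n^\alpha$. Your route also needs uniform versions of Assumptions 2--4 for the conditional loss and a nondegenerate Hessian, which are not contained in Assumptions 1--4.
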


%To derive a result for the behavior of the $\eta$- marginal posterior, we require a few additional definitions: let 
%$
%\ell_n(\varphi)=\prod_ic(u_{i1},\dots,u_{id};\psi)\prod_jf(y_{ij}\mid\eta)$, $\mathcal{L}_{\infty}(\varphi)=\lim_n\E[\ell_n(\varphi)]/n
%$ and let $\eta_0(\gamma):=\argmin_{\eta\in\mathcal{E}}\mathcal{L}_\infty\%{\eta,\psi_0(\gamma)\}$. 
%\begin{theorem}\label{thm:eta_conc}
%Under Assumptions 1-4 in Supporting Information~B, $M_n$ large enough, some $C>0$ and %$\alpha>0$,
%$$
%\int \int \mathds{1}\{\|\eta-\eta_0(\gamma)\|>CM_n^\alpha r_n^\alpha\}p_{\smi,\gamma}(\eta,\psi\mid\mathcal{D})\dt\psi\dt\eta\le  C/M_n
%$$with probability converging to one.
%\end{theorem}

The nature of the SMP construction means that results beyond posterior concentration derived in Theorem \ref{thm:phi_conc} are unlikely to be enlightening due to the SMPs nonlinear dependence on $\gamma$. In particular, it is not obvious to us that the existing theoretical results on the asymptotic shape of cut posteriors, \cite{frazier2025cutting}, would be satisfied for the SMP, and additional regularity conditions would be required to obtain results on the uncertainty quantification for this class of SMPs. Due to the technical nature of such constructions, we propose to study these issues in a more general setting separately.

\subsection{Interpretation of results}
The proposed SMP behaves in a distinctly different fashion to power-posterior SMP approaches: since we have multiple values of $\gamma$, so long as at least one $\gamma_j\in(0,1)$, the likelihood for the $j$-th marginal parameter $\widetilde{\eta}_j$ is still informed by data. Consequently, our theoretical results should be interpreted as showing that the SMP defines a posterior process $\{\gamma\in[0,1]^d: p_{\smi,\gamma}(\psi,\eta\mid\mathcal{D})\}$, where the points onto which $p_{\smi,\gamma}(\psi,\eta\mid\mathcal{D})$ concentrate explicitly depend on the values of $\gamma$. 

This behavior means that, even though the value of $\gamma$ is treated as a hyper-parameter in $p_{\smi,\gamma}$, its impact on the resulting inferences is substantially different from other hyper-parameters in generalized Bayesian inference such as the learning rate: it is well-known that the choice of learning rate does not impact the point onto which the posterior concentrates, at least in regular models; see \cite{syring2023gibbs} and \cite{mclatchie2025predictive} for discussion of this point. 

SMPs dependence on $\gamma$ suggests that it may be possible to learn an optimal value of $\gamma$ according to some measure of ``accuracy''. In particular, since each individual value of $\gamma\in[0,1]^d$ indexes a generalized posterior, and thus some loss function, we could in-principle select $\gamma$ via the loss function selection method proposed in \cite{jewson2022general} where we place uniform priors  (on 0 to 1) over each component $\gamma_j$. However, in contrast to the analysis in \cite{jewson2022general}, there is no reason to suspect that the value of $\gamma$, and hence $\varphi_0(\gamma)$, should be uniquely identified from the data. 

To demonstrate this fact, consider that the marginal models $F_j\left(y_{i j} ; \eta_j\right)$ are well-specified. Then, writing $r(y_{ij})/(n+1)=\frac{n}{n+1}F_n(y_{ij})$ we see that, for $n$ large, 
\begin{flalign*}
a_{i j}\left(\gamma_j, \eta^0_j, \mathcal{D}\right)&=\gamma_j F_j\left(y_{i j} ; \eta^0_j\right)+\left(1-\gamma_j\right) \frac{r\left(y_{i j}\right)-1}{n+1}=F(y_{ij};{\eta}_j^0)-\frac{(1-\gamma_j)}{n+1}+o(n),\\
b_{i j}\left(\gamma_j, \eta^0_j, \mathcal{D}\right)&=\gamma_j F_j\left(y_{i j} ; \eta_j\right)+\left(1-\gamma_j\right) \frac{r\left(y_{i j}\right)}{n+1}= \left(\frac{n}{n+1}\right)F(y_{ij};{\eta}_j^0)+o(n),
\end{flalign*}
since $\frac{n}{n+1}F_n(y_{ij})\approx F(y_{ij};\widetilde{\eta}_j^0)$. Hence, if the marginals are well-specified, the value of $\gamma_j$ is irrelevant, and unidentifiable asymptotically. Conversely, if the model is misspecified, then the value of $\gamma_j$ may have a non-negligible impact: define $F_0$ to be the true distribution and let $\Delta(\gamma_j)=\gamma_j\{F(y_{ij};\eta_j)-F_0(y_{ij})\}$, then we see that  
\begin{flalign*}
a_{i j}\left(\gamma_j, \eta_j, \mathcal{D}\right)&=\Delta(\gamma_j)+F_0(y_{ij})\left(\frac{n}{n+1}\right)-\frac{(1-\gamma_j)}{n+1}+o(n),\\
b_{i j}\left(\gamma_j, \eta_j, \mathcal{D}\right)&=\Delta(\gamma_j)+F_0(y_{ij})\left(\frac{n}{n+1}\right)+o(n),
\end{flalign*}which will clearly depend on $\gamma_j$, and will therefore influence the values of $\psi,\eta$ onto which the posterior concentrates. 

%We therefore see that there is no reason to suspect the existence of a unique solution to the limit optimization problem:
%$$
%\min_{\gamma\in[0,1]^d,\varphi\in\Psi}\left\{\mathcal{M}_\infty^\gamma(\varphi)+\mathcal{L}_\infty^\gamma(\eta)\right\}.
%$$Consequently, it is likely that the approach suggested in \cite{jewson2022general} to identify and consistently estimate hyper-parameters in generalized Bayesian loss functions will fail in this setting, since no unique optima may exist. 

This does not mean that one cannot attempt to target some value in the equivalence class of optima. Herein, we have attempted to select such a value of $\gamma$, and ultimately the value of $\varphi_\star(\gamma)$, through BO. We note here that the theoretical analysis of such a procedure is complicated by the lack of identification and so we leave a detailed theoretical analysis of the optimal value of $\gamma$, and thus the point onto which the SMP concentrates to further research. 

Before finishing this section, we note that many of the points raised by our theory are not unique to our SMP but are more generally true, under different conditions, for the SMP of \cite{CarNic2020}. That is, many of the techniques we have used to derive results for our SMP are directly applicable to the SMP of \cite{CarNic2020}, and in this way minor modifications of our results will deliver concentration results for the SMP approach of \cite{CarNic2020}. Consequently, similar to our SMP, the SMP of \cite{CarNic2020} will also concentrate onto a $\gamma$-dependent quantity. We do note, however, that this will not be the case for the SMP of \cite{FraNot2025}, which builds a linear pool between two different posteriors, and ultimately has different behavior to the SMI considered here. That being said, the SMP of \cite{FraNot2025} does not allow for dimension-by-dimension cutting as the modularity is applied at the level of the posterior, which is much less flexible than the approach proposed herein.

\section{Simulations} \label{sec:simulation}
To illustrate how the SMI posterior mediates information flow between modules we consider a bivariate copula model where the dependence is correctly specified and only one marginal is misspecified. 
We generate repeated datasets from the data generating process (DGP), and vary $\gamma=(\gamma_1,\gamma_2)^\top$ over a grid with the following three goals: (i) to illustrate how reducing the influence of the misspecified marginal can improve estimation of the dependence structure, (ii) to characterize the trade-offs the SMI posterior introduces for inference on the correctly specified marginal that is coupled to the misspecified marginal through the copula, and (iii) to show how $\gamma$ can be selected using a simple utility function in practice. 

\paragraph{Data generating process} We generate a total of $m=100$ datasets, each with $n=1,000$ observations, from a bivariate Gumbel copula model. The copula is parameterized through Kendall's tau correlation coefficient, with true value $\tau^\ast=0.7$. The first marginal with density $f_1^*$ is a log-normal distribution with location $\mu^\ast=-1$, and scale $\sigma^\ast=1$. The second marginal $f_2^*$ is gamma distributed with shape $\alpha^\ast=1$ and rate $\beta^\ast=1$. 

\paragraph{Statistical model} To each dataset we fit a Gumbel copula model with log-normal marginals with densities $f_1, f_2$, 
so that the copula and the first marginal are correctly specified, while the second marginal is misspecified. 
The model parameters are $\tau$ for the Gumbel copula, as well as $(\mu_j,\sigma^2_j)$, $j=1,2$, for the marginals. 
Writing $\psi=\log\frac{\tau}{1-\tau}$, $\eta_j=\left(\mu_j,\log\sigma_j\right)^\top$, $j=1,2$, $\eta=(\eta_1^\top,\eta_2^\top)^\top$ gives the vector of unconstrained parameters $\theta=\left(\psi,\eta^\top\right)^\top$. We adopt the following weakly informative priors $\tau\sim\UD(0,1)$, $\mu_1,\mu_2\sim\ND(0,100^2)$, and $\sigma^2_1,\sigma^2_2\sim\HND(0,100^2)$, where $\HND$ is 
a half-normal distribution.

\paragraph{Performance metrics} To evaluate the fit to the copula, we consider the mean squared error for $\tau$, integrating over the SMI posterior for $\psi$,
\begin{equation*}
    \mathcal{L}_\text{cop}=\int (\tau-\tau^*)^2 p_{\text{SMI},\gamma}(\psi\mid\mathcal{D})d\psi.
\end{equation*}
Since the marginals are potentially misspecified, we evaluate the model fit for each of the marginals by considering their predictive Kullback-Leibler divergences, integrating over the SMI posterior for $\eta_j$,
\begin{equation*}
    \mathcal{L}_j=\int\KL(f_j(y;\eta_j)\|f_j^*(y))p_{\text{SMI},\gamma}(\eta_j\mid\mathcal{D})d\eta_j,\qquad j=1,2.
\end{equation*}
The integrals are numerically approximated using Monte Carlo samples from $p_{\text{SMI},\gamma}(\theta\mid\mathcal{D})$. 

\paragraph{Results} Figures~\ref{fig:sim_grid}(A-C) show average values for $\mathcal{L}_1$, $\mathcal{L}_2$, and $\mathcal{L}_\text{cop}$ across the $100$ replicates. In each panel,
$\gamma=(\gamma_1,\gamma_2)^\top$ is varied over a mesh on $[0,1]^2$ with a grid size of $0.1$. 
Both $\mathcal{L}_2$ and $\mathcal{L}_\text{cop}$ exhibit a similar structure: small values for $\gamma_2$ and values close to $1$ for $\gamma_1$ improve the estimation of the copula and the second marginal.
That is, inference is improved when influence from the second but not from the first marginal is reduced, which matches the misspecification of the model. 
In contrast to this $\mathcal{L}_1$ is lowest when $\gamma_2=1$ (i.e. the misspecified marginal is uncut). 
Notably, $\gamma$ controls the influence of $\eta$ on $\psi$ while the information flow between $\eta_1$ and $\eta_2$ through the copula is only indirectly controlled by $\gamma$. 

Under the conventional posterior, $\gamma=(1,1)^\top$, $\tau$ is underestimated in this example. The maximum a-posteriori estimator for $\tau$ derived from the conventional posterior has an average bias of $-0.0113$ over all repetitions, so that the parameters of the misspecified marginal $\eta_2$ have less influence on $\eta_1$. Because $\tau$ is underestimated under the conventional posterior, dependence between the two marginals is weakened. Counteracting this, by choosing $\gamma$ so that the estimation of the copula is improved, increases the information flow from the misspecified second marginal to the well-specified first marginal resulting in poorer inference for the first marginal. 
Since the first marginal is well-specified the influence of $\gamma_1$ will vanish asymptotically as shown in Section~\ref{sec:theory}. Figure~\ref{fig:sim_grid} illustrates this behaviour already for finite $n$. If $\gamma_2$ is small, the choice of $\gamma_1$ has only very limited influence on the three performance metrics selected. In particular, $\mathcal{L}_\text{cop}$ is almost completely flat on $[0,1]\times[0,0.5]$. This also indicates that an optimal $\gamma$ might not be uniquely identified. 

\begin{figure}[tb]
    \centering
    \includegraphics[width=0.9\linewidth,keepaspectratio]{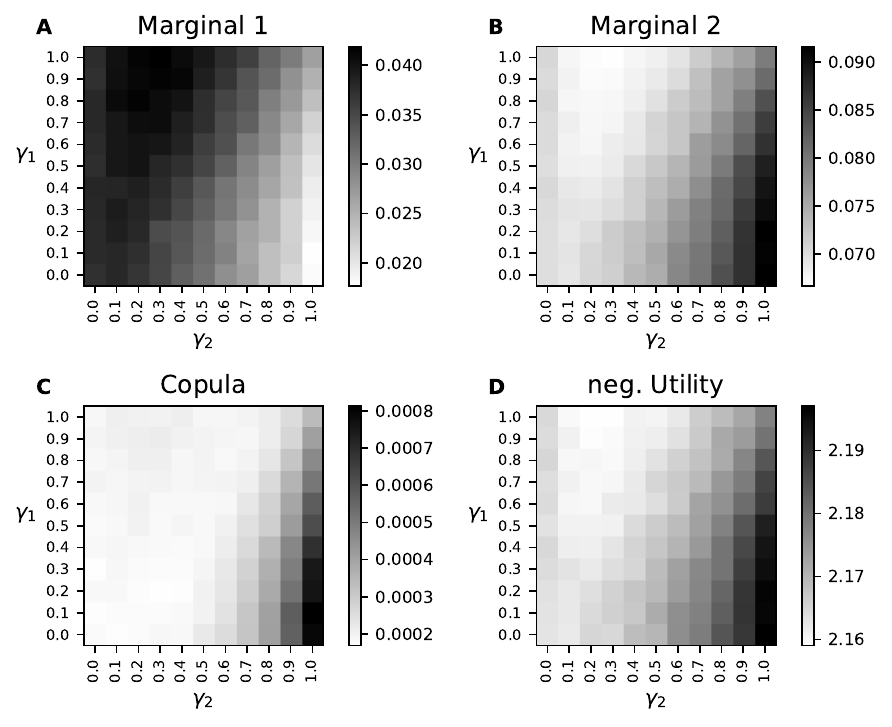}
    \caption{\small Simulations. Average values for $\mathcal{L}_1$ (Panel A), $\mathcal{L}_2$ (Panel B), $\mathcal{L}_\text{cop}$ (Panel C), and $-u(\gamma)$ (Panel D) across $m=100$ repetitions for different combinations of $\gamma=(\gamma_1,\gamma_2)^\top$. Smaller values indicated by lighter colour are preferred. The conventional posterior $\gamma=(1,1)^\top$ corresponds to the upper right corner, while the lower left corner is the fully cut posterior $\gamma=(0,0)^\top$.}
    \label{fig:sim_grid} 
\end{figure}

\paragraph{Selecting $\gamma$} The results presented above illustrate the difficulties in selecting $\gamma$. Even if prior knowledge on the misspecification of the marginals is available, choosing $\gamma$ to match the misspecification can lead to unforeseen and undesired consequences in the estimation. In this example, the first marginal is correctly specified while the second marginal is misspecified, indicating a choice of $\gamma=(1,0)^\top$, where only influence from the second marginal is cut. While this choice improves the estimation of the dependence structure and the misspecified marginal, the estimation of the first marginal deteriorates. 

As shown in Section~\ref{sec:theory} the behaviour of the SMI posterior and in particular the point on which the posterior concentrates asymptotically, depends directly on $\gamma$. Here, we consider the utility function
\begin{align}\label{eq:u_sim}
u(\gamma)=\int \frac{1}{n}\sum_{i=1}^n\sum_{j=1}^d \log f_j(y_{ij};\eta_j)p_{\text{SMI},\gamma}(\eta\mid\mathcal{D})d\eta,
\end{align}
as an external metric to select $\gamma^*$. \eqref{eq:u_sim} balances the trade-off between the marginal fits described above. Figure~\ref{fig:sim_grid}(D) shows average values for $-u(\gamma)$ across the $100$ repetitions. Optimal values are reached for $\gamma_2$ close to $0$ and $\gamma_1$ close to 1 matching the misspecification in the statistical model. Truly SMI posteriors with $\gamma\in(0,1)^2$ outperform possible cut-posteriors with multiple cuts, $\gamma\in\{0,1\}^2$, in this metric.

\section{Stock market volatility and bond yields}\label{sec:yieldseg}

\subsection{Problem description and data} 
There is growing evidence that volatility in the stock market and bond yields are dependent; see~\cite{Campbelletal2003,Jubinski2012} and~\cite{MoeSorJFE2022}. However, the form of this dependence has not been explored previously. We do so here using a copula model with a skew-normal (SN) copula that allows for asymmetric dependence between variable pairs. Parametric marginal distributions are preferred over nonparametric ones because focus is on the tail behavior of the variables, and we use a ``Sinh–Arcsinh'' (SAS) distribution~\citep{jones2009sinh} for each marginal. This distribution is a four-parameter family constructed from a transformation of a Gaussian base distribution that models skewness and kurtosis flexibly. However, one or more of these marginals may be misspecified, so that computing the SMI posterior is attractive here. 
 
For bond yields we consider the effective yield on domestic U.S. publicly issued debt graded AAA ($Y_{\text{AAA},t}$) and BBB ($Y_{\text{BBB},t}$) by Standard \& Poors. The former grade is the least risky, while the latter is more risky and the most liquid. Daily values from 1 January 2022 to 30 June 2025 were sourced from the Federal Reserve Bank of St. Louis, which corresponds to the post-pandemic period when U.S. interest rates rose quickly from historical lows. 
We consider the dependence of the yields with U.S. equity market volatility
measured using the average daily values of the VIX ($Y_{\text{VIX},t}$) sourced from Yahoo Finance. All variables are transformed to the real line using their logarithms, and we fit the copula model to the $d=6$ dimensional vector $Y_t=(Y_{\text{VIX},t},Y_{\text{AAA},t},Y_{\text{BBB},t},Y_{\text{VIX},t-1},Y_{\text{AAA},t-1},Y_{\text{BBB},t-1})^\top$. Our goal is to gain insights into the temporal and cross-sectional dependence structure. 

%\paragraph{Data and copula model}  Let $y_{\text{VIX},t}$ denote the VIX at day $t$. The VIX is an index of overall market volatility and the data was sourced from Yahoo finance. $y_{\text{AAA},t}$ and $y_{\text{BBB},t}$ represent the effective yield publicly issued in the US domestic market given investment grade rating AAA and BBB respectively. This data is made publicly available by the Federal Reserve Bank of St. Louis. After a log-transformation, we jointly model the $d=6$ dimensional vector $y_t=(y_{\text{VIX},t},y_{\text{AAA},t},y_{\text{BBB},t},y_{\text{VIX},t-1},y_{\text{AAA},t-1},y_{\text{BBB},t-1})^\top$ over the time period from 01 January 2022 to 30 June 2025. This time period is selected as it corresponds to a stable market period after the COVID pandemic. Our goal is to gain insights into the temporal and cross-sectional dependence structure using a skew-Gaussian copula model. 

\subsection{Copula model}
The SN copula is the implicit copula of the SN distribution of~\cite{AzzVal1996}; see~\citet{smith2023implicit} for an overview of implicit copulas. Let $X=(X_1,\ldots,X_d)^\top\sim F_X$ be distributed SN with location zero, scale matrix $\Omega$, and skewness parameter $\alpha\in\mathbb{R}^d$. Then it has density 
%$C(u)=F_X(F^{-1}_{X_1}(u_1),\dots,F^{-1}_{X_d}(u_d))$, where $F_X$ denotes the multivariate skew-normal distribution of~\cite{AzzVal1996} with marginals %$F_{X_j}$, $j=1,\dots,d$. 
%This copula allows for asymmetric tail dependence making it an attractive choice. 
%A random vector $X\sim F_X$ has density
\begin{equation*}
    f_X(x;\Omega;\alpha)=2\varphi(x;0,\Omega)\Phi(\alpha^\top x)\,,
\end{equation*}
where $x=(x_1,\ldots,x_d)^\top$, $\varphi(x;\mu,\Sigma)$ denotes the density of an $N(\mu,\Sigma)$ distribution, and $\Phi$ is the standard normal distribution function. 
%The vector
%$\alpha\in\mathbb{R}^d$ is a skewness parameter and $\Omega\in\mathbb{R}^{d\times d}$ is a covariance matrix. 
The SN distribution is closed under marginalization, with $F_{X_j}$ a zero mean univariate SN distribution with skewness parameter $\bar{\alpha}_j$ that is a function of $\alpha,\Omega$. Then, $X$ has copula function $C(u)=F_X(F^{-1}_{X_1}(u_1),\dots,F^{-1}_{X_d}(u_d))$ with density
\begin{equation*}
    c(u\mid \Omega,\alpha)=\frac{f_X(x;\Omega,\alpha)}{\prod_{j=1}^d f_{X_j}(x_j;\Bar{\alpha}_j)},
\end{equation*}
where $x_j=F^{-1}_{X_j}(u_j)$ and $f_{X_j}=\frac{\partial}{\partial x_j}F_{X_j}$.
Setting $\Omega$ to a correlation matrix identifies the copula parameters $\{\Omega,\alpha\}$ and enforces $F_{X_j}$ to have unit scale for $j=1,\ldots,d$.

We assume that the three series $(Y_{\text{VIX},t},Y_{\text{AAA},t},Y_{\text{BBB},t})$ follow a 3-dimensional 
stationary stochastic process. 
Because $Y_t$ contains lagged values of these three series, this assumption further restricts the parameter space, with $\Omega$ a block symmetric matrix, $\alpha$ constrained, and the marginals for $Y_{j,t}$ and $Y_{j,t-1}$ being the same for each $j\in\{\text{VIX},\text{AAA},\text{BBB}\}$. Full specification of the copula parameterization under this stationary assumption is given in Supporting Information D.1, where the 
dimensions of $\psi$, $\eta$ and $\gamma$ are 15, 12 and 3, respectively.
Finally, when evaluating the copula density and function, calculating $F_{X_j}^{-1}$ is a key computation, and we do this using the spline interpolation approach outlined in~\cite{SmiMan2018}.

%Since the marginals do not match a known parameteric distribution, we use the flexible $4$ parameter sinh-arcsinh normal distribution with distribution function 
%\begin{equation*}
%    F(y)=\Phi\left(\sinh\left(\delta+\kappa\sinh^{-1}\left(\frac{y-\mu}{\sigma}\right)\right)\right),
%\end{equation*}
%where $\Phi$ denotes the CDF of the standard normal distribution and $\sinh(\cdot)$ the hyperbolic sine. $\mu$ controls the location, $\sigma$ the scale, $\delta$ the skewness, and $\kappa$ the kurtosis of the distribution. Parameters for the marginal for $Y_{j,t}$ and $Y_{j,t-1}$, $j\in\{\text{VIX},\text{AAA},\text{BBB}\}$, are shared so that $\eta$ is $12$-dimensional.

\subsection{Empirical results}
\paragraph{Selecting $\gamma$} We consider the utility function
\begin{equation}\label{eq:selection_criterion}
    u(\gamma) = \int \sum_{t=1}^T
    \sum_{j\in\{\text{VIX},\text{AAA},\text{BBB}\}} 
    \log f(y_{j,t};\eta_j)p_{\text{SMI},\gamma}(\eta\mid\mathcal{D})d\eta,
\end{equation}
which we optimize using BO. This utility function is similar to that discussed in Section~\ref{sec:simulation}. The fully cut model proposed by \citet{SmiYuNotFra2025} corresponds to $\gamma=(0,0,0)^\top$ and has a higher utility value than the conventional posterior where $\gamma=(1,1,1)^\top$. The optimal vector of influence parameters maximizing \eqref{eq:selection_criterion} is $\gamma^\ast=(1.00,0.61,0.00)^\top$.

\paragraph{Marginal fit} Figure~\ref{fig:app_marginals} shows histograms for the three marginals as well as estimated densities under the posterior means for the conventional, fully cut, and optimal SMI posteriors. Differences between these three posteriors are clearly visible, with a SAS distribution capturing the VIX marginal well, but not those of the AAA and BBB yields. 
%The histograms for the two bond yields are highly left-skewed and exhibit strong kurtosis in the left tail. 
The estimated marginal density under the conventional posterior does not fully capture the modes of these two marginals, which are better captured by the cut and SMI posteriors. The marginal of the BBB yield variable has the strongest misspecification, followed by that of the AAA yield variable, which is consistent with the optimal $\gamma^\ast$ value. 

\begin{figure}[tbh!]
    \centering
    \includegraphics[width=0.95\linewidth,keepaspectratio]{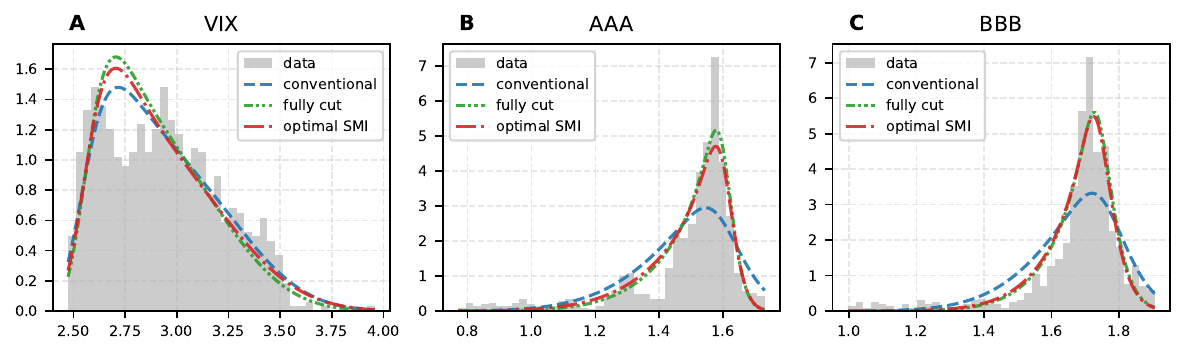}
    \caption{\small Plots of the estimated marginal densities in the bond yields example, for the VIX (Panel~A), AAA yield (Panel~B), and BBB yield (Panel~C). Histograms of the data are given in gray, and the density estimates are evaluated at the posterior means $\widehat{\eta}$ for the conventional posterior (blue dashed line), the fully cut posterior, $\gamma=(0,0,0)^\top$, (green dotted line), and the optimal copula-SMI posterior (red dashed-dot line).
    }
    \label{fig:app_marginals} 
\end{figure}

\begin{figure}[htbp]
    \centering
    \includegraphics[width=0.9\linewidth]{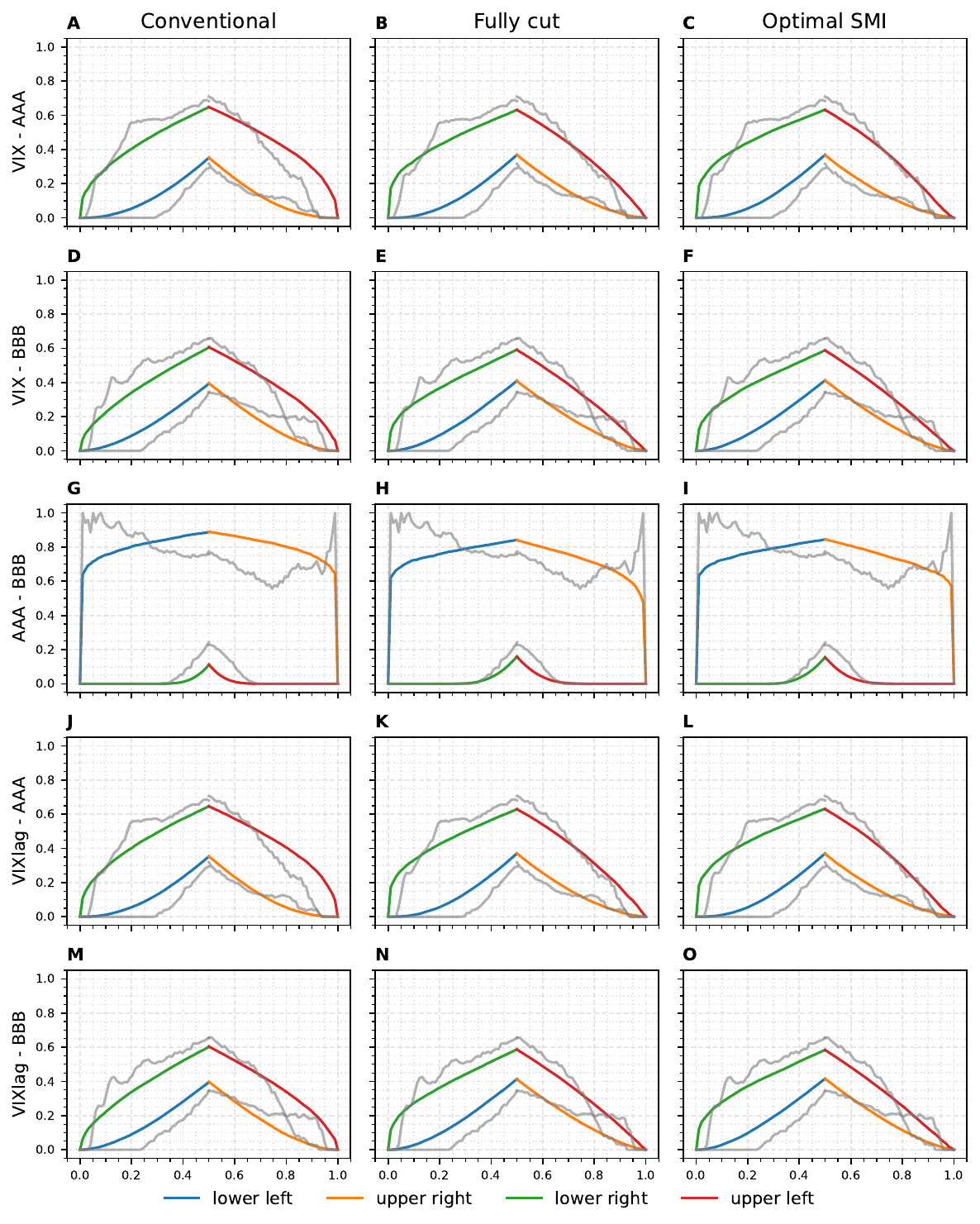}
    \caption{\small Bond yields example. $\rho_{LL}(\zeta)$ (blue), $\rho_{UR}(1-\zeta)$ (orange), $\rho_{LR}(\zeta)$ (green), and $\rho_{UL}(1-\zeta)$ (red) versus $\zeta\in(0,0.5)$ for the five ordered pairs VIX-AAA, VIX-BBB, AAA-BBB, VIXlag-AAA and VIXlag-BBB (rows) for the conventional (left), fully cut (middle) and SMI posterior (right) means of $\psi$.}
    \label{fig:quantile_dependence}
\end{figure}

\paragraph{Dependencies} The main objective of the analysis is estimate the dependence structure between the variables, which we summarize using pairwise metrics. 
Let $U=(U_1,\dots,U_d)\sim C(u;\psi)$, we define for a pair $(U_j,U_l)$, $j\not=l$, and a given quantile $\zeta\in(0,0.5)$ dependence in each 
of the quadrants as: 
\begin{itemize}[itemsep=0pt]
    \item[] Lower Left: $\rho_\text{LL}(\zeta;U_j,U_l)=\mathbb{P}(U_l\leq\zeta\mid U_j\leq \zeta)$, 
    \item[] Upper Right: $\rho_\text{UR}(\zeta;U_j,U_l)=\mathbb{P}(U_l>1-\zeta\mid U_j>1-\zeta)$,
    \item[] Lower Right: $\rho_\text{LR}(\zeta;U_j,U_l)=\mathbb{P}(U_l\leq\zeta\mid U_j>1-\zeta)$, 
    \item[] Upper Left: $\rho_\text{UL}(\zeta;U_j,U_l)=\mathbb{P}(U_l>1-\zeta\mid U_j\leq\zeta)$,
\end{itemize}
which are computed from the bivariate marginal copula of $(U_j,U_l)$. Note that the variable pairs are strictly ordered, and switching their order gives  the identities $\rho_\text{LL}(\zeta;U_j,U_l)=\rho_\text{UR}(\zeta;U_l,U_j)$ and 
$\rho_\text{LR}(\zeta;U_j,U_l)=\rho_\text{UL}(\zeta;U_l,U_j)$.
Following~\cite{OhPat2013}, Figure~\ref{fig:quantile_dependence} presents ``quantile dependence plots'' for the five ordered pairs VIX-AAA, VIX-BBB, AAA-BBB, VIXlag-AAA and VIXlag-BBB. These visualize asymmetric dependence along the major diagonal by plotting  
 $\rho_\text{LL}(\zeta)$ and $\rho_\text{UR}(1-\zeta)$ (blue and orange lines), and along the minor diagonal by plotting $\rho_\text{LR}(\zeta)$ and $\rho_\text{UL}(1-\zeta)$ against $\zeta\in(0,0.5)$ (green and red lines). Estimates are given for the conventional, cut and SMI posterior means of $\psi$ in the three columns. In each panel, the empirical (i.e. nonparametric) equivalent is also given in gray for comparison. Major diagonal dependence between AAA and BBB are strong for all estimators, reflecting the fact that bond yields move together very tightly in response market events. However, the estimated dependence structure between the yields and the VIX (contemporaneous or lagged) differs greatly between conventional and cut/SMI posteriors. The conventional posterior suggests dependence is symmetric, whereas the cut and SMI posteriors suggest strong asymmetry which is more consistent with the empirical equivalents. The asymmetry is most apparent in the minor diagonal, as overall dependence between the VIX and yields is negative.  

Following~\cite{DenSmiMan2025}, pairwise asymmetry in the major and minor diagonals for the bivariate copula of pair $(U_j,U_l)$ can be measured at quantile $\zeta$ by  
\begin{eqnarray*}
\Delta_\text{Major}(\zeta;U_j,U_l)&=&\rho_\text{UR}(\zeta;U_j,U_l)-\rho_\text{LL}(\zeta;U_j,U_l)\,,\mbox{ and }\\
    \Delta_\text{Minor}(\zeta;U_j,U_l)&=&\rho_\text{UL}(\zeta;U_j,U_l)-\rho_\text{LR}(\zeta;U_j,U_l)\,.
\end{eqnarray*}
Switching the order of the variables gives the identities $\Delta_\text{Major}(\zeta;U_j,U_l)=-\Delta_\text{Major}(\zeta;U_l,U_j)$ and $\Delta_\text{Minor}(\zeta;U_j,U_l)=-\Delta_\text{Minor}(\zeta;U_l,U_j)$.
Figure~\ref{fig:heatmaps_asymmetry} plots the posterior means of these at quantile $\zeta=0.1$ for all variable pairs in the form of heatmaps. Results are not given for the conventional posterior because they are either exactly or very close to zero for all variable pairs. However, there is a clear difference between those for the cut and SMI posteriors, with the latter indicating strong positive asymmetric dependence between the VIX (contemporanous or lagged) and AAA and BBB yields. This is more economically intuitive, as volatility spikes typically coincide with risk-off episodes that induce flight-to-quality and nonlinear repricing of credit risk. It is also consistent with existing empirical evidence that equity market volatility is more strongly linked to bond yields and credit spreads in stress states than in tranquil periods~\citep{Jubinski2012}.

\begin{figure}[htb]
    \centering
    \includegraphics[width=0.7\linewidth]{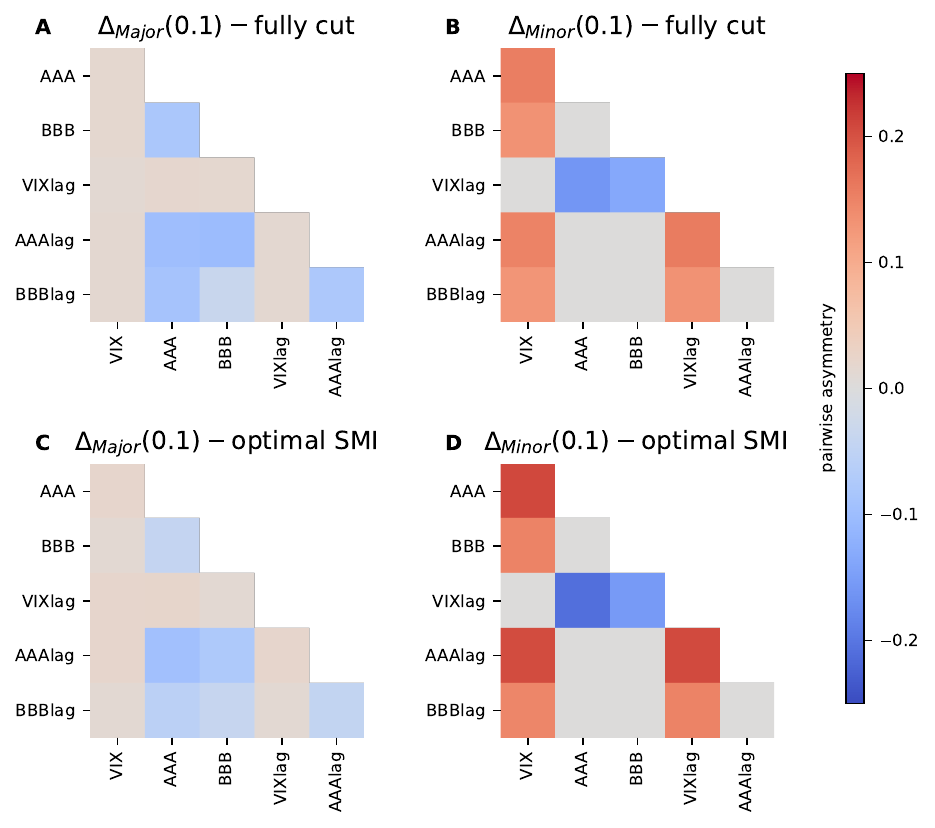}
    \caption{\small Bond yields example. Heatmaps for the pairwise asymmetries $\Delta_\text{Major}(0.1)$ (left column), and $\Delta_\text{Minor}(0.1)$ (right column) for fully cut (top row) and SMI posterior (bottom row) means of $\psi$. Each metric is for the  variable order (x-axis, y-axis); for example, top the value is for the ordered pair (VIX,AAA). Thus, in Panels~B and~D, $\Delta_\text{Minor}(\zeta;U_{\text{AAA}},U_{\text{VIXlag}})<0$ which implies 
    $\Delta_\text{Minor}(\zeta;U_{\text{VIXlag}},U_{\text{AAA}})>0$; similarly for the pair (BBB,VIXlag) and its switched order (VIXlag,BBB).}
    \label{fig:heatmaps_asymmetry}
\end{figure}

Further empirical results, including copula parameter estimates and Kendall's tau values, are given for all three posteriors in Supporting Information~D.2.

\section{Conclusion}\label{sec:conc}
We have developed a novel semi-modular inference framework for copula models, designed for settings where inferences about copula parameters must be protected from corruption by misspecified marginal distributions, and where the degree of misspecification varies across marginals. Our approach treats each marginal as its own module with a marginal-specific influence parameter, allowing the framework to adapt to the degree of misspecification by controlling how strongly each marginal shapes the semi-modular joint posterior. We provide efficient variational computational methods, a principled strategy for selecting influence parameters via Bayesian optimization of a utility function, and theoretical validation in the form of a posterior concentration result. The practical value of the methodology is demonstrated through simulations and a substantive real application examining the relationship between bond yields and a volatility index, where the SMI posterior most clearly reveals the asymmetric dependence structure compared to the fully cut and conventional posterior.

There are many directions for future work and we mention one of them.  Our paper has focused on treating each marginal as its own module, rather than treating all marginals as a single module.  It would also be possible to consider decomposing the copula function in some way, perhaps using a vine copula, and decomposing the copula part of the model into multiple distinct modules.  When different copula components are misspecified to varying degrees, the semi-modular framework could then be used to limit the influence of misspecified copula modules on inference for parameters elsewhere in the model, offering a more flexible approach to robust dependence modelling.

\section*{Supplementary material} 
\textbf{Supporting Information} An online Supporting Information contains A) further discussion on the mixed pseudo likelihood \eqref{eq:epl_multiple}, B) the assumptions and proofs for the theoretical analysis, C) additional information for the simulation study, and D) further results for the financial application (.pdf).

\noindent\textbf{Python code} Code is publicly available at \href{https://github.com/kocklucx/CopulaSMI}{github.com/kocklucx/CopulaSMI}.

\FloatBarrier
\setlength{\bibsep}{0pt plus 0.3ex}
\bibliography{bib}
\FloatBarrier
\appendix
\section*{Supporting Information}

\setcounter{section}{0}
\setcounter{equation}{0}
\setcounter{assumption}{0}
\setcounter{theorem}{0}
\renewcommand{\theequation}{\thesection.\arabic{equation}}
\setcounter{figure}{0}  
\setcounter{table}{0}  
\renewcommand{\thefigure}{\thesection.\arabic{figure}}
\renewcommand{\thetable}{\thesection.\arabic{table}}

\section{Mixed Pseudo Likelihood}\label{app:mpl}
Equation~(9) in Section~3.2 gives a
pseudo likelihood for the multiple cut case that is extended with the auxilary variables $u$. To integrate over these auxilary variables, partition $u_i$ into uncut $u_{i,D^c(\delta)}=\{u_{ij};j\in D^c(\delta)\}$ and cut 
$u_{i,D(\delta)}=\{u_{ij};j\in D(\delta)\}$ components, and decompose the copula density into
$c(u_i;\psi)=c_{D^c}(u_{i,D^c(\delta)};\psi) c_{D|D^c}(u_{i,D(\delta)}|u_{i,D^c(\delta)};\psi)$ where $c_{D^c}$ is the marginal copula density of $u_{i,D^c(\delta)}$, and $c_{D|D^c}$ is the conditional density of $u_{i,D(\delta)}$ given $u_{i,D^c(\delta)}$. Then if the indices $D(\delta)=\{j_1,j_2,\ldots,j_r\}$,
\begin{eqnarray*}
 p_{\text{mpl},\delta}(\mathcal{D}\mid\psi,\eta_{D^c(\delta)}) &= &\int p_{\text{mpl},\delta}(\mathcal{D},u\mid\psi,\eta_{D^c(\delta)}) du\\
 &= &\prod_{i=1}^n c_{D^c}(u_{i,D^c(\delta)};\psi)\Delta_{a_{i,j_1}(\mathcal{D})}^{b_{i,j_1}(\mathcal{D})}\cdots \Delta_{a_{i,j_r}(\mathcal{D})}^{b_{i,j_r}(\mathcal{D})}C_{D|D^c}\left(u_{i,D(\delta)}|u_{i,D^c(\delta)}\right)
\end{eqnarray*}
with $C_{D|D^c}$ the conditional distribution function, and $u_{ij}=F_j(y_{ij};\eta_j)$ for $j\in D^c(\delta)$. This is a mixed density for the  rank values for the marginals
indexed by $D(\delta)$ and continuous observations for the marginals indexed by $D^c(\delta)$.

Note that the cut posterior fits the two module case by setting
$p_1(\mathcal{D}|\psi,\eta_{D^c(\delta)})=p_{\text{mpl},\delta}(\mathcal{D}|\psi,\eta_{D^c(\delta)})\prod_{j\in D^c(\delta)}f_j(y_{ij};\eta_j)$ and 
$p_2(\mathcal{D}|\psi,\eta)=p(\mathcal{D}|\psi,\eta)/p_1(\mathcal{D}|\psi,\eta_{D^c(\delta)})$.

\section{Theory}\label{sec:theory_app}
This Section contains a discussion on the assumptions, and proofs for the results stated in Section~4 of the manuscript. 

\subsection{Proofs}
Recall
$$
p_{\text{SMI},\gamma}(\psi\mid\mathcal{D})=\int\int\int p_{\text{SMI},\gamma}(\psi,u,\eta,\widetilde{\eta}\mid \mathcal{D}) du d\widetilde{\eta}d\eta
$$ where 
\begin{align*}
    p_{\text{SMI},\gamma}(\psi,u,\eta,\widetilde{\eta}\mid \mathcal{D})=p_{\text{SMI},\gamma}(\psi,u,\widetilde{\eta}\mid\mathcal{D})p(\eta\mid\psi,\mathcal{D}). 
\end{align*}
%We maintain the following assumption on the vector of influence parameters.
Further, we have that 
\begin{flalign*}
p_{\text{SMI},\gamma}(\psi,u,\widetilde{\eta}\mid\mathcal{D})&\propto p_{\text{SMI},\gamma}(\mathcal{D},u\mid\psi,\widetilde\eta)p(\psi,\widetilde{\eta})\prod_{i=1}^n\prod_{j=1}^df_j(y_{ij}\mid\widetilde{\eta}_j).\\
    p_{\text{SMI},\gamma}(\mathcal{D},u\mid\psi,\eta)&=\prod_{i=1}^n c(u_i;\psi)\prod_{j=1}^d \indicator{a_{ij}(\gamma_j,\eta_j,\mathcal{D})\leq u_{ij}\leq b_{ij}(\gamma_j,\eta_j,\mathcal{D})},
\end{flalign*}
where 
\begin{align*}
    a_{ij}(\gamma_j,\eta_j,\mathcal{D}) &= \gamma_jF_j(y_{ij};\eta_j)+(1-\gamma_j)\frac{r(y_{ij})-1}{n+1},\\
    b_{ij}(\gamma_j,\eta_j,\mathcal{D}) &= \gamma_jF_j(y_{ij};\eta_j)+(1-\gamma_j)\frac{r(y_{ij})}{n+1}.
\end{align*}
Define the SMI posterior, marginal of $u$,
$$
p_{\text{SMI},\gamma}(\psi,\widetilde\eta\mid\mathcal{D})=\frac{\int p_{\text{SMI},\gamma}(\mathcal{D},u\mid\psi,\widetilde\eta)p(\psi,\widetilde{\eta})\prod_{i=1}^n\prod_{j=1}^df_j(y_{ij}\mid\widetilde{\eta}_j)\dt u }{\int\int\int  p_{\text{SMI},\gamma}(\mathcal{D},u\mid\psi,\widetilde\eta)p(\psi,\widetilde{\eta})\prod_{i=1}^n\prod_{j=1}^df_j(y_{ij}\mid\widetilde{\eta}_j)\dt u \dt\widetilde\eta \dt\psi}
$$
However, when $\gamma_j\in[0,1)$ the integral in (8) reduces to
\begin{flalign*}
\int p_{\text{SMI},\gamma}(\mathcal{D},u\mid\psi,\widetilde\eta)\dt u&=\prod_{i=1}^{n}\Delta_{a_{i1}(\gamma_1,\widetilde\eta_1,\mathcal{D})}^{b_{i1}(\gamma_1,\widetilde\eta_1,\mathcal{D})}\dots \Delta_{a_{id}(\gamma_d,\widetilde\eta_d,\mathcal{D})}^{b_{id}(\gamma_d,\widetilde\eta_d,\mathcal{D})}C(u_i;\psi)%\\&=\prod_{i=1}^{n}\Delta^\gamma_i(\widetilde\eta,\psi)
\\&=:\Delta_n(\widetilde\eta,\psi;\gamma).
\end{flalign*}

This allows us to write the posterior $p_{\text{SMI},\gamma}(\psi\mid\mathcal{D})$ in a more convenient form: for $f_n(\widetilde{\eta}):=\prod_{i=1}^n\prod_{j=1}^df_j(y_{ij}\mid\widetilde{\eta}_j)$, we have 
$$
p_{\text{SMI},\gamma}(\psi,\widetilde{\eta}\mid\mathcal{D})=\frac{ \Delta_n(\widetilde\eta,\psi;\gamma)p(\psi,{\eta})f_n(\widetilde{\eta})}{\int\int\Delta_n(\eta,\psi;\gamma)p(\psi,\widetilde{\eta})f_n(\widetilde{\eta})\dt \widetilde\eta \dt\psi},
$$which defines the $\psi$-marginal SMI posterior as
$$
p_{\text{SMI},\gamma}(\psi\mid\mathcal{D})=\frac{\int \Delta_n(\widetilde\eta,\psi;\gamma)p(\psi,\widetilde{\eta})f_n(\widetilde{\eta})\dt\widetilde\eta}{\int\int\Delta_n(\widetilde\eta,\psi;\gamma)p(\psi,{\eta})f_n(\widetilde{\eta})\dt \widetilde\eta \dt\psi}.
$$

Note that, although $p_{\text{SMI},\gamma}(\psi\mid\mathcal{D})$ does not have the form of a Gibbs measure,  $p_{\text{SMI},\gamma}(\psi,\widetilde\eta\mid\mathcal{D})$ does in fact have a Gibbs form. In particular, abusing notation and now letting $\varphi=(\psi^\top,\widetilde{\eta}^\top)^\top$, defining 
$
\MD_n(\varphi;\gamma)=-[\log \Delta_n(\widetilde\eta,\psi;\gamma)+\log f_n(\widetilde{\eta})]/n,
$ we have that 
\begin{equation}\label{eq:gibbs_smi}
p_{\text{SMI},\gamma}(\varphi\mid\mathcal{D})=\frac{\exp\{-n\MD_n(\varphi;\gamma)\}p(\varphi)}{\int \exp\{-n\MD_n(\varphi;\gamma)\}p(\varphi)\dt\varphi}\equiv \frac{\exp\{-n[\MD_n(\varphi;\gamma)-\MD_n(\varphi_0(\gamma);\gamma)]\}p(\varphi)}{\int \exp\{-n[\MD_n(\varphi;\gamma)-\MD_n(\varphi_0(\gamma);\gamma)]\}p(\varphi)\dt\varphi}.
\end{equation}

We maintain the following condition that stipulates the support of the influence parameter $\gamma$. 
\begin{assumption}\label{ass:gammas}For all $j=1,\dots,d$,  $\gamma_j\in[0,1]$, with $\gamma_j\not\in\{0,1\}$ for at least one $j$. 
\end{assumption}
In addition to the requirement on $\gamma$ maintained in Assumption \ref{ass:gammas}, we require the following regularity conditions. 

\begin{assumption}\label{ass:loss}
(i) For any $\varphi\in\Phi:=\Psi\times\mathcal{E}$, the function $\varphi\mapsto \MD(\varphi;\gamma):=\E\MD_n(\varphi;\gamma)$ exists. 
(ii) For some $\varepsilon>0$, and any $\gamma$ as in Assumption \ref{ass:gammas}, there exist $\gamma\mapsto \varphi_0(\gamma)$ such that, for all $\delta>0$, and some $\delta'>0$,
$$
\inf_{\varphi:\|\varphi-\varphi_0^\gamma\|>\delta}\MD(\varphi;\gamma)>\MD(\varphi_0(\gamma);\gamma)+\delta'. 
$$(iii) There exist $C>0$, $\alpha>0$, and $U\subset \Phi$ containing $\varphi_0(\gamma)$ such that, for all $\varphi\in U$, $\|\varphi-\varphi_0^\gamma\|\le C[\MD(\varphi;\gamma)-\MD(\varphi_0(\gamma);\gamma)]^\alpha$.
\end{assumption}
\begin{assumption}\label{ass:entropy}
There exists a positive sequence $r_n\downarrow0$ as $n\rightarrow\infty$ such that, for any $\gamma$ as in Assumption \ref{ass:gammas}, 
$$
\E \left[\sup_{\varphi\in\Phi}|\MD_n(\varphi;\gamma)-\E[\MD_n(\varphi;\gamma)]|\right]\le r_n.
$$
\end{assumption}

\begin{assumption}\label{ass:prior_mass}
Let $\epsilon>0$ with $n\epsilon\ge1$. For $\mathcal{A}_{\epsilon} := \{\varphi\in\Phi: |\MD(\varphi;\gamma)-\MD(\varphi_0(\gamma);\gamma)|\le \epsilon\}$,  the prior satisfies $\int_{\Phi}\mathds{1}\{\varphi\in \mathcal{A}_{\epsilon}\} p(\varphi) \dt\varphi \ge e^{-n\epsilon}$.
% $\Pi(\mathcal{A}_{\epsilon_n})\ge e^{-n\epsilon_n}$.
\end{assumption}

Assumption \ref{ass:loss}(i)-(ii) requires that a minimizer of the loss $\MD(\varphi;\gamma)$ exists for any $\gamma$ at which we wish to compute the SMP; while Assumption \ref{ass:loss}(iii) is a type of Holder-continuity that we must maintain to link fluctuations in the loss near the minimizer, to differences in the parameter space. Assumption \ref{ass:entropy} is a high-level condition that requires the empirical process between the sample and population loss converges at some rate; sufficient conditions for this are discussed in Section \ref{sec:primitive}. Assumption \ref{ass:prior_mass} is a standard prior mass condition used in most applications of generalized Bayesian inference. 

Theorem 1 in the main text is a direct consequence of Theorems \ref{thm:phi_conc_supp} and \ref{thm:eta_conc} that follow.

\begin{theorem}\label{thm:phi_conc_supp}
Under Assumption \ref{ass:gammas}-\ref{ass:prior_mass},  for $n$ and $M>0$ large enough, 
$$
\E\int \mathds{1}\left\{\|\psi-\psi_0(\gamma)\|>CM^\alpha r_n^\alpha\right\}p_{\mathrm{SMI},\gamma}(\psi\mid\mathcal{D})\dt\psi\leq 1/M.
$$  
\end{theorem}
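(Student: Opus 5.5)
The strategy is a standard Gibbs-posterior concentration argument applied to the joint posterior \eqref{eq:gibbs_smi} for $\varphi=(\psi^\top,\widetilde\eta^\top)^\top$. We then pass to the $\psi$-marginal using the fact that $\{\|\psi-\psi_0(\gamma)\|>t\}\subseteq\{\|\varphi-\varphi_0(\gamma)\|>t\}$. Note that $p_{\mathrm{SMI},\gamma}(\psi\mid\mathcal{D})$ is exactly the $\psi$-marginal of $p_{\mathrm{SMI},\gamma}(\varphi\mid\mathcal{D})$, because the factor $p(\eta\mid\psi,\mathcal{D})$ in \eqref{eq:SMI_full} integrates to one. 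It therefore suffices to bound $\E\,\Pi_n(B_n^c)$, where $B_n^c=\{\varphi:\|\varphi-\varphi_0(\gamma)\|>CM^\alpha r_n^\alpha\}$ and $\Pi_n$ denotes the Gibbs posterior in \eqref{eq:gibbs_smi}.

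First, we translate the parameter-space event into a loss-excess event. By Assumption \ref{ass:loss}(iii), on $U$ we have $\|\varphi-\varphi_0(\gamma)\|>CM^\alpha r_n^\alpha$, which implies $\MD(\varphi;\gamma)-\MD(\varphi_0(\gamma);\gamma)>Mr_n$. Off $U$, Assumption \ref{ass:loss}(ii) gives a fixed gap $\delta'$, which exceeds $Mr_n$ for $n$ large. Hence $B_n^c\subseteq\{\varphi:\MD(\varphi;\gamma)-\MD(\varphi_0;\gamma)>Mr_n\}$. Write $\Delta_n:=\sup_\varphi|\MD_n(\varphi;\gamma)-\MD(\varphi;\gamma)|$, so that $\E\Delta_n\le r_n$ by Assumption \ref{ass:entropy}.

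Second, we bound the numerator and the denominator. On $B_n^c$, the empirical excess satisfies $\MD_n(\varphi)-\MD_n(\varphi_0)\ge Mr_n-2\Delta_n$. The numerator $\int_{B_n^c}\exp\{-n[\MD_n(\varphi)-\MD_n(\varphi_0)]\}p(\varphi)\dt\varphi$ is therefore at most $\exp\{-nMr_n+2n\Delta_n\}$. For the denominator, restrict to $\mathcal{A}_{\epsilon}$ with $\epsilon=r_n$; we need $nr_n\ge1$, which is implicit in any nontrivial rate. There the empirical excess is at most $r_n+2\Delta_n$, so Assumption \ref{ass:prior_mass} gives a lower bound of $\exp\{-2nr_n-2n\Delta_n\}$. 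Combining these,
\begin{equation*}
\Pi_n(B_n^c)\le \exp\{-n(M-2)r_n+4n\Delta_n\}.
\end{equation*}

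Third, we take expectations. Split on the event $E_n=\{\Delta_n\le Mr_n/8\}$. Since $\Pi_n\le1$, Markov's inequality gives $\mathbb{P}(E_n^c)\le 8/M$. On $E_n$, the posterior mass is bounded by $\exp\{-n(M/2-2)r_n\}\le \exp\{-(M/2-2)\}$ because $nr_n\ge1$, and this is $o(1/M)$ for large $M$. This yields a bound of the form $c/M$; adjusting the constants (for example splitting at $\Delta_n\le Mr_n/c'$ and relabelling $M$) gives the stated $1/M$.

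The main obstacle is the second step, which requires $\Delta_n$ to be controlled uniformly on the same scale $r_n$ as the prior-mass radius, together with ensuring $nr_n\ge1$. I would handle this by taking $\epsilon=\max(r_n,1/n)$ in Assumption \ref{ass:prior_mass} and absorbing the difference into $M$. A secondary subtlety is that Assumption \ref{ass:loss}(iii) only holds locally, and it is the well-separation in Assumption \ref{ass:loss}(ii) that handles the complement of $U$.
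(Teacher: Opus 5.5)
Your argument is correct, but it takes a different route from the paper's.

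\textbf{The paper's route.} The paper never bounds the posterior ratio directly. It applies Markov's inequality under the posterior,
$Q_n[\MD(\varphi)-\MD(\varphi_0)>M\epsilon]\le (M\epsilon)^{-1}\int\{\MD(\varphi)-\MD(\varphi_0)\}q_n(\varphi)\dt\varphi$,
and peels off $2\sup_\varphi|\MD_n(\varphi)-\MD(\varphi)|$. It then controls the remaining term $\E\int\{\MD_n(\varphi)-\MD_n(\varphi_0)\}q_n(\varphi)\dt\varphi$ through the Gibbs variational characterization of $q_n$, namely that $q_n$ minimizes $\int[\MD_n-\MD_n(\varphi_0)]q+\KL(q\|p)/n$. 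Into this it inserts the prior restricted to $\mathcal{A}_{r_n}$, whose KL to the prior is at most $nr_n$ by Assumption~\ref{ass:prior_mass}, and swaps $\E$ and $\inf$. That yields $\E\int\{\MD(\varphi)-\MD(\varphi_0)\}q_n\,\dt\varphi\lesssim r_n$, and hence the $O(1/M)$ bound. The transfer to $\|\varphi-\varphi_0\|$ via Assumption~\ref{ass:loss}(ii)--(iii) and the marginalization to $\psi$ are then exactly as in your proof.

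\textbf{Your route.} You instead bound the numerator and denominator of the Gibbs ratio separately on $E_n=\{\Delta_n\le Mr_n/8\}$, and pay $8/M$ for $E_n^c$.

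\textbf{What each buys.} The paper's route gives a first-moment (PAC-Bayes type) bound on posterior excess loss that uses only the optimality of $q_n$ in the variational problem. It therefore carries over almost verbatim to a minimizer over a restricted family, such as the Gaussian variational approximation the paper actually computes, given a competitor in that family with small excess loss plus $\KL/n$. Your ratio argument needs the exact Gibbs form and does not cover that case. Conversely, your route is more elementary and more informative. On $E_n$ the posterior tail is $\exp\{-n(M/2-2)r_n\}$, so the entire $1/M$ comes from the fluctuation of $\Delta_n$. Any exponential tail bound on $\Delta_n$ would immediately upgrade the result to a high-probability statement with exponentially small posterior mass.

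\textbf{Shared caveats.} Both proofs rely on the two implicit conditions you flagged: that $U$ in Assumption~\ref{ass:loss}(iii) contains a ball around $\varphi_0(\gamma)$, and that $nr_n\ge1$ so Assumption~\ref{ass:prior_mass} applies at $\epsilon=r_n$. Both also obtain $c/M$ rather than $1/M$, which is fixed by rescaling $M$ at the cost of replacing $C$ by $c^\alpha C$.

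One small correction. Your fallback $\epsilon=\max(r_n,1/n)$ does not really help if $nr_n\to0$: the ratio bound becomes $\exp\{2-nMr_n+4n\Delta_n\}$, which is useless unless $M\gtrsim 1/(nr_n)\to\infty$. That regime is irrelevant in practice, e.g.\ $r_n=\log n/\sqrt n$ under Lemma~\ref{lem:suff_copula}, and the paper assumes $nr_n\ge1$ just as implicitly.
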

\begin{proof}
Let us simplify notation and drop dependence on $\gamma$ for now. Similarly, let us write $q_n(\varphi)=p_{\mathrm{SMI},\gamma}(\varphi\mid\mathcal{D})$, and $q_n(\psi)=\int q_n(\varphi)\dt\widetilde\eta$. The proof proceeds through two main steps. 
\begin{enumerate}
    \item First, we derive a bound on the probability of $\{\MD(\varphi)-\MD(\varphi_0)\}>M\epsilon$ under $q_n(\varphi)$. 
    \item We then use the variational definition of $q_n(\varphi)$ to derive a rate for this bound. 
    \item We then transfer this bound to $\|\varphi-\varphi_0\|$.
    \item A marginalization argument is then used to arrive at a similar bound for $q_n(\psi)$.
\end{enumerate}

\noindent\textbf{Step 1.} Fix $\epsilon>0$. From Markov's inequality,
\begin{flalign}
 Q_n\left[\{\MD(\varphi)-\MD(\varphi_0)\}>{(M\epsilon)}\right]\leq \frac{1}{(M\epsilon)}\int \{\MD(\varphi)-\MD(\varphi_0)\}q_n(\varphi)\dt\varphi.\label{eq:markov}
\end{flalign}
Rewrite and bound the integral term in \eqref{eq:markov} as 
\begin{flalign}
&\int \{\MD(\varphi)-\MD(\varphi_0)\}q_n(\varphi)\dt\varphi\nonumber\\&=\int\left[\left\{\MD(\varphi)-\MD_n(\varphi)\right\}+\left\{\MD_n(\varphi)-\MD_n(\varphi_0)\right\}+\left\{\MD_n(\varphi_0)-\MD(\varphi_0)]\right\}   \right]q_n(\varphi)\dt\varphi\nonumber 
\\&\le 2\sup_{\varphi\in\Phi}|\MD(\varphi)-\MD_n(\varphi)|+\int \left\{\MD_n(\varphi)-\MD_n(\varphi_0)\right\}q_n(\varphi)\dt\varphi\label{eq:bound_two}
\end{flalign}Substituting \eqref{eq:bound_two} into \eqref{eq:markov}, taking expectations and using Assumption \ref{ass:entropy} we obtain 
\begin{flalign}
&\E  Q_n\left[\{\MD(\varphi)-\MD(\varphi_0)\}>M\epsilon\right]\nonumber\\&\leq   \frac{2}{(M\epsilon)}\E\left[\sup_{\varphi\in\Phi}|\left\{\MD(\varphi)-\MD_n(\varphi)\right\}|\right]+ \frac{1}{(M\epsilon)}\E\int \left\{\MD_n(\varphi)-\MD_n(\varphi_0)\right\}q_n(\varphi)\dt\varphi\nonumber\\&\leq \frac{r_n}{(M\epsilon)}+\frac{1}{(M\epsilon)}\E\int \left\{\MD_n(\varphi)-\MD_n(\varphi_0)\right\}q_n(\varphi)\dt\varphi\label{eq:end_s1}.
\end{flalign}

\noindent\textbf{Step 2.} We now use the variational definition of $q_n(\varphi)$ to bound the second term in \eqref{eq:end_s1}. Since $q_n(\varphi):=p_{\text{SMI},\gamma}(\varphi\mid\mathcal{D})$ as in \eqref{eq:gibbs_smi}, $q_n(\varphi)$ can be equivalently defined as the solution to the following variational optimization problem:
\begin{flalign}\label{eq:gibbs_variation}
q_n(\varphi):=\argmin_{q\in\mathcal{P}(\Phi)}\left\{\int [\MD_n(\varphi)-\MD_n(\varphi_0)]q(\varphi)\dt\varphi+\frac{\KL(q\|p)}{n}\right\}. 
\end{flalign}In particular, we see that, from \eqref{eq:gibbs_variation}, 
\begin{flalign*}
\int [\MD_n(\varphi)-\MD_n(\varphi_0)]q_n(\varphi)\dt\varphi&\le \int [\MD_n(\varphi)-\MD_n(\varphi_0)]q_n(\varphi)\dt\varphi+\frac{\KL(q_n\|p)}{n}\nonumber\\&=\inf_{q\in\mathcal{P}(\Phi)}\left\{\int [\MD_n(\varphi)-\MD_n(\varphi_0)]q(\varphi)\dt\varphi+\frac{\KL(q\|p)}{n}\right\},
\end{flalign*}where the first inequality uses the fact that $\KL(q_n\|p)>0$, and the equality follows \eqref{eq:gibbs_variation}. Apply expectations to both sides of the above, and use Jensen's inequality to obtain:
\begin{flalign}
   \E \int \{\MD_n(\varphi)-\MD_n(\varphi_0)\}q_n(\varphi)\dt\varphi&\le \E\inf_{q\in\mathcal{P}(\Phi)}\left\{\int \{\MD_n(\varphi)-\MD_n(\varphi_0)\}q(\varphi)\dt\varphi+\frac{\KL(q\|p)}{n}\right\}\nonumber\\&\le\inf_{q\in\mathcal{P}(\Phi)}\left\{\int \{\MD(\varphi)-\MD(\varphi_0)\}q(\varphi)\dt\varphi+\frac{\KL(q\|p)}{n}\right\}.\label{eq:inter_step}
\end{flalign}
Recalling the set $\mathcal{A}_\epsilon$ in Assumption \ref{ass:prior_mass}, we define the restricted prior measure 
$$
\rho_n(\varphi)=\begin{cases}\frac{p(\varphi)}{\int_{\mathcal{A}_\epsilon}p(\varphi)\dt\varphi}&\text{ if }\varphi\in\mathcal{A}_\epsilon\\0&\text{ else}    
\end{cases}.
$$
Since the above is true for all $q\in\mathcal{P}(\Phi)$, take $q=\rho_n$ in \eqref{eq:inter_step} and apply Assumption \ref{ass:prior_mass} to see that 
\begin{flalign*}
   \E \int \{\MD_n(\varphi)-\MD_n(\varphi_0)\}q_n(\varphi)\dt\varphi&\le\inf_{q\in\mathcal{P}(\Phi)}\left\{\int \{\MD(\varphi)-\MD(\varphi_0)\}q(\varphi)\dt\varphi+\frac{\KL(q\|p)}{n}\right\}\\&\le \int \{\MD(\varphi)-\MD(\varphi_0)\}\rho_n(\varphi)\dt\varphi+\frac{\KL(\rho_n\|p)}{n}\\&\le \epsilon +\frac{\KL(\rho_n\|p)}{n},
\end{flalign*}where the second term follows since for $\varphi\in\mathcal{A}_\epsilon$, we have $\{\MD(\varphi)-\MD(\varphi_0)\}\le \epsilon$ by construction. For the second term, note that by Assumption \ref{ass:prior_mass} we have that 
$$
\KL(\rho_n\|p)=-\log\int_{\mathcal{A}_\epsilon}p(\varphi)\dt\varphi\le n\epsilon.
$$Putting this together,  we arrive at the bound
\begin{flalign}
 \E \int \{\MD_n(\varphi)-\MD_n(\varphi_0)\}q_n(\varphi)\dt\varphi\le 2\epsilon.\label{eq:bound_s2}
\end{flalign}

Applying the bound in \eqref{eq:bound_s2} into \eqref{eq:end_s1} we arrive at 
\begin{flalign*}
 \E  Q_n\left[\{\MD(\varphi)-\MD(\varphi_0)\}>M\epsilon\right]&\leq \frac{1}{M}\frac{r_n}{\epsilon}+\frac{2}{M}=\frac{r_n+2\epsilon}{M\epsilon}
\end{flalign*}Take $\epsilon=r_n$ so that  $\frac{r_n+2\epsilon}{M\epsilon}\le 4/M$ implies that $\E  Q_n\left[\{\MD(\varphi)-\MD(\varphi_0)\}>M r_n\right]\leq 4/M\asymp  1/M$.
\\

\noindent\textbf{Step 3.}  Let $\delta>0$ be such that $\{\varphi\in\Phi:\|\varphi-\varphi_0\|\le \delta\}\subset U$, where $U$ is defined in Assumption \ref{ass:loss}(iii). By Assumption \ref{ass:loss}(ii), there exists $\delta'>0$ such that $\|\varphi-\varphi_0\|>\delta$ implies that $\MD(\varphi)>\MD(\varphi_0)+\delta'$. Let $n$ be large enough so that $\delta'>Mr_n$, which implies that 
$$
\left\{\varphi\in\Phi: \{\MD(\varphi)-\MD(\varphi_0)\}\le Mr_n\right\}\subset U.
$$
However, from \textbf{Step 3} we know that 
$$
\E Q_n\left\{\MD(\varphi)-\MD(\varphi_0)\le Mr_n\right\}\gtrsim 1- \frac{1}{M},
$$and applying Assumption \ref{ass:loss}(iii) implies that 
$$
\E Q_n\left\{\|\varphi-\varphi_0\|\le CM^\alpha r_n^\alpha\right\}\gtrsim 1- \frac{1}{M}. 
$$Negating the event implies, for $M$ large enough,   
$
\E Q_n\left\{\|\varphi-\varphi_0\|> CM^\alpha r_n^\alpha\right\}\lesssim \frac{1}{M}. 
$
\\

\noindent\textbf{Step 4.} We can now transfer the bound obtained in \textbf{Step 3} using the definition of $q_n(\psi)$. In particular, fix $\delta>0$ and $\delta'$ as in \textbf{Step 3}. For this choice, if $\|\varphi-\varphi_0\|\le CM^\alpha r^\alpha_n$, then $\|\psi-\psi_0\|\le CM^\alpha r^\alpha_n$ since $\psi$ is a subset $\varphi$. Therefore, we see that 
$$
\{\psi:\|\psi-\psi_0\|>CM^\alpha r^\alpha_n\}\subseteq\{\varphi:\|\varphi-\varphi\|>CM^\alpha r^\alpha_n\}.
$$Hence, we have that 
\begin{flalign*}
\E Q_n[\|\psi-\psi_0\|>CM^\alpha r^\alpha_n]&:=\E\int\int \mathds{1}[\|\psi-\psi_0\|>CM^\alpha r^\alpha_n]q_n(\psi,\eta)\dt\eta\\&\le \E Q_n[\|\varphi-\varphi_0\|>CM^\alpha r^\alpha_n]\\&\lesssim 1/M,
\end{flalign*} where the last line follows by applying the conclusion of \textbf{Step 3}.
\end{proof}

\begin{theorem}\label{thm:eta_conc}
Under Assumptions 1-4, 
$$
\int \int \mathds{1}\{\|\eta-\eta_0^\gamma\|>C M^\alpha_n r_n^\alpha\}p_{\smi,\gamma}(\eta,\psi\mid\mathcal{D})\dt\psi\dt\eta
$$converges to zero in probability.
\end{theorem}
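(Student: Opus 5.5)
The plan is to exploit the two-stage structure $p_{\smi,\gamma}(\eta,\psi\mid\mathcal{D})=p_{\smi,\gamma}(\psi\mid\mathcal{D})\,p(\eta\mid\psi,\mathcal{D})$ and split the target probability according to whether $\psi$ is close to $\psi_0(\gamma)$. Set $\epsilon_n=CM_n^\alpha r_n^\alpha$ and $B_n=\{\psi:\|\psi-\psi_0(\gamma)\|\le\epsilon_n\}$. Then
\begin{align*}
\int\int \mathds{1}\{\|\eta-\eta_0^\gamma\|>\epsilon_n\}p_{\smi,\gamma}(\eta,\psi\mid\mathcal{D})\dt\psi\dt\eta
&\le \int_{B_n^c}p_{\smi,\gamma}(\psi\mid\mathcal{D})\dt\psi\\
&\quad+\sup_{\psi\in B_n}\int \mathds{1}\{\|\eta-\eta_0^\gamma\|>\epsilon_n\}p(\eta\mid\psi,\mathcal{D})\dt\eta .
\end{align*}
By Theorem \ref{thm:phi_conc_supp} the first term has expectation at most $1/M_n$. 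Since $M_n\to\infty$, Markov's inequality shows that it converges to zero in probability.

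The second term is a statement about the conventional conditional posterior of $\eta$ given $\psi$, uniformly over a shrinking neighbourhood of $\psi_0(\gamma)$. This conditional posterior is a Gibbs measure with loss $\ell_n(\eta,\psi)/n$, where $\ell_n$ is read as the negative log of the full copula likelihood in $\eta$ for fixed $\psi$. I would rerun Steps 1--3 of the proof of Theorem \ref{thm:phi_conc_supp} with $\varphi$ replaced by $\eta$, $\MD_n$ replaced by $\ell_n(\cdot,\psi)/n$, and the centring point $\eta_0(\gamma)=\argmin_\eta\mathcal{L}_\infty\{\eta,\psi_0(\gamma)\}$. This step needs analogues of Assumptions \ref{ass:loss}--\ref{ass:prior_mass} for $\eta\mapsto\ell_n(\eta,\psi)$: a uniform law of large numbers over $\mathcal{E}\times\Psi$ at rate $r_n$, well-separation and Hölder identifiability of $\eta_0(\gamma)$ for $\mathcal{L}_\infty\{\cdot,\psi_0(\gamma)\}$, and a prior-mass condition for $p(\eta\mid\psi)$ holding uniformly in $\psi$ near $\psi_0(\gamma)$. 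I interpret the phrase ``Assumptions 1--4'' as covering these, because the statement is asserted under them.

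The main obstacle is handling $\psi\neq\psi_0(\gamma)$ inside $B_n$. I would control the perturbation by
$$
\sup_\eta|\mathcal{L}_\infty(\eta,\psi)-\mathcal{L}_\infty\{\eta,\psi_0(\gamma)\}|\lesssim\|\psi-\psi_0(\gamma)\|,
$$
a Lipschitz condition in $\psi$. This contributes an extra $O(\epsilon_n)$ to the excess loss in Step 1. The Markov bound then becomes $\lesssim(r_n+\epsilon_n)/(M_n r_n)$ after choosing $\epsilon=r_n$ as before. If $\epsilon_n\gg r_n$, that is, if $\alpha<1$, I would instead set the rate $\epsilon=\epsilon_n$ in Step 2 and absorb the weaker rate into the constants $C,\alpha$. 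This is acceptable since only convergence in probability is claimed. Summing the two terms, with the second bounded by an expected quantity of order $1/M_n$ uniformly over $B_n$, gives convergence to zero in probability.
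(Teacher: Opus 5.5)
Your proposal is sound and uses the same split as the paper: on whether $\|\psi-\psi_0(\gamma)\|\le CM_n^\alpha r_n^\alpha$, with the first piece handled by Theorem~\ref{thm:phi_conc_supp}. You treat the second piece by a genuinely different route, though.

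\textbf{How the paper handles the second piece.} The paper never takes a supremum over $B_n$. It argues that, because $p_{\smi,\gamma}(\psi\mid\mathcal{D})\Rightarrow\delta_{\psi_0}$ and $p(\eta\mid\psi,\mathcal{D})$ is bounded, the second piece converges to the tail mass of the single conditional posterior $p(\eta\mid\psi_0,\mathcal{D})$. It then reruns the Markov/Gibbs-variational argument only at $\psi_0$. This needs regularity of the conditional loss only at $\psi_0$ and nominally keeps the radius $CM_n^\alpha r_n^\alpha$. However, the passage to $\psi_0$ is heuristic. The map $\psi\mapsto\int\mathds{1}\{\|\eta-\eta_0\|>\varepsilon\}p(\eta\mid\psi,\mathcal{D})\dt\eta$ is random and $n$-dependent, and $\varepsilon$ is afterwards allowed to shrink. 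Weak convergence of the $\psi$-marginal alone therefore does not justify the step; some equicontinuity in $\psi$ is needed. That is exactly what your Lipschitz condition on $\mathcal{L}_\infty$ supplies.

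\textbf{What your route needs.} Your version is rigorous provided you redo Step~2 pathwise. That is, bound $n^{-1}\int\{\ell_n(\eta,\psi)-\ell_n(\eta_0,\psi)\}\rho_n(\eta)\dt\eta$ by the population excess loss plus $2\sup_{\eta,\psi}|\ell_n/n-\mathcal{L}_\infty|$. Do not go through the paper's Jensen step $\E\inf\le\inf\E$: it controls only $\sup_\psi\E$, not the $\E\sup_{\psi\in B_n}$ your bound requires. The uniform prior-mass condition is automatic under the paper's product prior, where $p(\eta\mid\psi)=p(\eta)$. Both routes need conditions on $\ell_n$ that Assumptions 1--4 do not provide, since those concern $\MD_n$. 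Your explicit list therefore improves on the paper's remark that those assumptions ``remain satisfied'' at $(\psi_0^\top,\eta^\top)^\top$.

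\textbf{What you pay.} The cost is the radius. The $\psi$-perturbation adds $O(\epsilon_n)$ to the excess loss, so you obtain $C(M_n\max\{r_n,\epsilon_n\})^\alpha$. When $\alpha\le1$ this is of order $M_n^{\alpha(1+\alpha)}r_n^{\alpha^2}$. That proves the result in the ``for some $C>0$ and $\alpha>0$'' form stated in the main text. It is those existential constants, not the fact that convergence is only in probability, that license the relabelling.

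\textbf{A correction to your case split.} Your gloss ``that is, if $\alpha<1$'' misses $\alpha=1$. There $\epsilon_n=CM_nr_n$ is of the same order as the Markov threshold $M_nr_n$, so $(r_n+\epsilon_n)/(M_nr_n)$ does not vanish. Simply take $\epsilon=\max\{r_n,\epsilon_n\}$ in all cases. Let $M_n\to\infty$ slowly enough that $M_n\max\{r_n,\epsilon_n\}\to0$, so the localisation into $U$ in Step~3 still applies.
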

\begin{proof}
 First, recall that $p_{\smi,\gamma}(\eta,\psi\mid\mathcal{D})=p_{\smi,\gamma}(\psi\mid\mathcal{D})p(\eta\mid\psi,\mathcal{D})$. % and that
% $$
% p(\eta\mid\psi,\mathcal{D})\propto p(\eta\mid\psi)p_{\smi,\gamma=1}(\mathcal{D}\mid \eta,\psi).
% $$ 
%
We again drop terms dependence on $\gamma$ to simply notation. Fix $\varepsilon>0$, and separate the integral in question into:
 \begin{flalign}
  &\int \int \mathds{1}\{\|\eta-\eta_0\|>\varepsilon\}p_{\smi,\gamma}(\eta,\psi\mid\mathcal{D})\dt\psi\dt\eta\nonumber\\&=  
  \int \int_{\|\psi-\psi_0\|>C M^\alpha_n r_n^\alpha} \mathds{1}\{\|\eta-\eta_0\|>\varepsilon\}p_{\smi,\gamma}(\eta,\psi\mid\mathcal{D})\dt\psi\dt\eta\nonumber\\&+\int \int_{\|\psi-\psi_0\|\le C M^\alpha_n r_n^\alpha} \mathds{1}\{\|\eta-\eta_0\|>\varepsilon\}p_{\smi,\gamma}(\eta,\psi\mid\mathcal{D})\dt\psi\dt\eta\label{eq:two_terms}.
 \end{flalign}
The first term satisfies the following trivial bound: for $n$ and $M$ large enough,
\begin{flalign}
&\int \int_{\|\psi-\psi_0\|>C M^\alpha_n r_n^\alpha} \mathds{1}\{\|\eta-\eta_0\|>\varepsilon\}p_{\smi,\gamma}(\eta,\psi\mid\mathcal{D})\dt\psi\dt\eta\nonumber\\&\le \int_{\|\psi-\psi_0\|>C M^\alpha_n r_n^\alpha}p_{\smi,\gamma}(\psi\mid\mathcal{D})\dt\psi\nonumber\\&\le 1/M,\label{eq:term1}
\end{flalign}
with probability converging to one by Theorem \ref{thm:phi_conc_supp}.  

For the second term, we note that Theorem \ref{thm:phi_conc_supp} implies that $p_{\smi,\gamma}(\psi\mid\mathcal{D})\Rightarrow\delta_{\psi_0}$, and since  $p(\eta\mid\psi,\mathcal{D})$ is bounded for all $\psi$, we have 
\begin{flalign}\label{eq:weak_conv}
&
\int \int_{\|\psi-\psi_0\|\le C M^\alpha_n r_n^\alpha} \mathds{1}\{\|\eta-\eta_0\|>\varepsilon\}p_{\smi,\gamma}(\eta,\psi\mid\mathcal{D})\dt\psi\dt\eta \nonumber\\&\Rightarrow \int \mathds{1}\{\|\eta-\eta_0\|>\varepsilon\}p(\eta\mid\psi_0,\mathcal{D})\dt\eta.
\end{flalign}

However,  
$$
 p(\eta\mid\psi_0,\mathcal{D})\propto p(\eta\mid\psi_0)p_{\smi,\gamma=1}(\mathcal{D}\mid \eta,\psi_0)p(\eta\mid\varphi_0)\propto e^{-n\MD_n(\eta,\varphi_0)}p(\eta\mid\varphi_0)
 $$ Since Assumptions 1-4 are satisfied for all $\varphi=(\psi^\top,\eta^\top)^\top$, they remain satisfied at $(\psi_0^\top,\eta^\top)^\top$. 
 
 Hence, repeating similar arguments to those in Theorem \ref{thm:phi_conc_supp} demonstrates that, for any $\varepsilon>0$, and $n$ and $M$ large enough,
 $$
 \E \int \mathds{1}\{\|\eta-\eta_0\|>\varepsilon\}p(\eta\mid\psi_0,\mathcal{D})\dt\eta\lesssim 1/M.
 $$
 Taking $\varepsilon=CM^\alpha r_n\alpha$ we have that, with probability converging to one, 
 $$
  \int \mathds{1}\{\|\eta-\eta_0\|>\varepsilon\}p(\eta\mid\psi_\star,\mathcal{D})\dt\eta\le 1/M.
 $$Now, for $M=M_n\rightarrow\infty$ as $n\rightarrow\infty$, possibly slowly, we can apply the above into equation \eqref{eq:weak_conv} to conclude that 
 \begin{flalign}
\int \int_{\|\psi-\psi_0\|\le C M^\alpha_n r_n^\alpha} \mathds{1}\{\|\eta-\eta_0\|>\varepsilon\}p_{\smi,\gamma}(\eta,\psi\mid\mathcal{D})\dt\psi\dt\eta\le 1/M_n\label{eq:term2}
\end{flalign}with probability converging to one.

Applying \eqref{eq:term1} and \eqref{eq:term2} into equation \eqref{eq:two_terms} delivers the stated convergence. 
\end{proof}

\subsection{Primitive conditions for Assumption \ref{ass:entropy}}\label{sec:primitive}
Assumption 3 is a high-level condition that is somewhat hard to interpret. Herein, we show that Assumption 3 is satisfied by lower-level conditions on the marginal models and copula function which can be checked on a per-model basis.

\begin{assumption}\label{ass:cop_cont}
    The copula model density is continuous in $u$ for all $u\in(0,1)$.
\end{assumption}
\begin{assumption}\label{ass:suff_psi}The set $\Psi$ is a bounded subset of $\mathbb{R}^d_\psi$.
    For any $w\in(0,1)^d$, and any $\psi,\psi'\in\Psi$, $|\log c(w;\psi)-\log c(w;\psi')|\le L(w)\|\psi-\psi'\|$, with $\E[L(W)^2]<\infty$. Lastly, $\E[\sup_{\psi\in\Psi}|\log c(W;\psi)|^2]<\infty$. 
\end{assumption}
\begin{assumption}\label{ass:suff_eta}The set $\mathcal{E}$ is a bounded subset of $\mathbb{R}^d_\eta$.
    For any $y\in\mathcal{Y}$, and any $\eta,\eta'\in\mathcal{E}$, $|\log f(y;\eta)-\log f(y;\eta')|\le L(y)\|\eta-\eta'\|$, with $\E[L(Y)^2]<\infty$. Lastly, $\E[\sup_{\eta\in\mathcal{E}}|\log f(Y;\eta)|^2]<\infty$. 
\end{assumption}
\begin{lemma}\label{lem:suff_copula}
Under Assumptions \ref{ass:cop_cont}-\ref{ass:suff_eta},  Assumption \ref{ass:entropy} is satisfied with $r_n=\log(n)/\sqrt{n}$.  
\end{lemma}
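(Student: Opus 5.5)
The plan is to split the centered loss into a marginal part and a copula part and bound each separately. Recall that $\MD_n(\varphi;\gamma)=-[\log\Delta_n(\widetilde\eta,\psi;\gamma)+\log f_n(\widetilde\eta)]/n$. It therefore suffices to show
$$
\E\sup_{\varphi\in\Phi}\Bigl|\tfrac1n\log f_n(\widetilde\eta)-\E\tfrac1n\log f_n(\widetilde\eta)\Bigr|\lesssim n^{-1/2}
$$
and that the analogous bound for $\tfrac1n\log\Delta_n$ holds at rate $\log(n)/\sqrt n$.

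\textbf{Marginal part.} Here $\tfrac1n\log f_n(\widetilde\eta)=\sum_j \tfrac1n\sum_i\log f_j(y_{ij};\widetilde\eta_j)$ is an average of i.i.d. terms indexed by $\widetilde\eta_j$, which ranges over the bounded set $\mathcal E$. By Assumption~\ref{ass:suff_eta} the class $\{\log f_j(\cdot;\eta):\eta\in\mathcal E\}$ is Lipschitz in the parameter with square-integrable envelope $L$. Its $L_2$-bracketing numbers therefore satisfy $N_{[\,]}(\epsilon)\lesssim (\mathrm{diam}(\mathcal E)\|L\|_2/\epsilon)^{d_\eta}$. The bracketing maximal inequality, together with the square-integrable envelope $\sup_\eta|\log f|$, then gives an expected supremum of order $n^{-1/2}$.

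\textbf{Copula part: reduction to a density average.} For each $i$, $\Delta_i(\widetilde\eta,\psi;\gamma)$ is the $C(\cdot;\psi)$-mass of a rectangle $R_i$. Let $J=\{j:\gamma_j<1\}$. For $j\in J$ the side length of $R_i$ in coordinate $j$ is exactly $(1-\gamma_j)/(n+1)$. For $j$ with $\gamma_j=1$ the side is degenerate, and the difference operator is read as the density in that coordinate, as in the reduction of~(8). By Assumption~\ref{ass:cop_cont} and the mean value theorem for integrals,
$$
\log\Delta_i=\sum_{j\in J}\log\frac{1-\gamma_j}{n+1}+\log c(w_i;\psi)
$$
for some $w_i\in R_i$. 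The first sum is nonrandom and free of $\varphi$, so it cancels after centering. The problem thus reduces to $\tfrac1n\sum_i\log c(w_i;\psi)$, where $w_{ij}=\gamma_jF_j(y_{ij};\widetilde\eta_j)+(1-\gamma_j)\hat F_{nj}(y_{ij})+O(1/n)$ and $\hat F_{nj}$ denotes the $j$th empirical distribution function.

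\textbf{Copula part: replacing $w_i$ by a population point.} Replace $w_i$ by
$$
w_i^0=\gamma F(y_i;\widetilde\eta)+(1-\gamma)F_0(y_i).
$$
This leaves two terms. The first is $\tfrac1n\sum_i\log c(w_i^0;\psi)$, an i.i.d. average over the class indexed by $(\psi,\widetilde\eta)$ in a bounded set. It is handled exactly as the marginal part, using the Lipschitz condition in $\psi$ from Assumption~\ref{ass:suff_psi}, Lipschitz dependence of $F_j(y;\eta)$ on $\eta$, and the envelope moment bound. This contributes $n^{-1/2}$. The second is the remainder
$$
\tfrac1n\sum_i\bigl[\log c(w_i;\psi)-\log c(w_i^0;\psi)\bigr].
$$
Since $\|w_i-w_i^0\|_\infty\le\max_j\sup_y|\hat F_{nj}-F_{0j}|+O(1/n)$, the Dvoretzky--Kiefer--Wolfowitz inequality gives $\|w_i-w_i^0\|_\infty=O(\sqrt{\log n/n})$ uniformly in $i$ and $\varphi$, outside an event of probability $n^{-1}$. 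On that exceptional event I would bound the contribution using the second-moment envelope.

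\textbf{Main obstacle.} The hard step is converting this uniform $\sqrt{\log n/n}$ shift in $u$ into a bound on the remainder that is uniform in $\psi$. Assumption~\ref{ass:cop_cont} gives only continuity of $c$ in $u$, not a modulus of continuity. I would first try to read the Lipschitz condition on $\log c$ in Assumption~\ref{ass:suff_psi} as joint in $(w,\psi)$, or equivalently add local Lipschitz continuity in $w$ with an integrable constant. That yields a remainder of order $\sqrt{\log n/n}\le\log(n)/\sqrt n$, which is where the extra logarithmic factor in $r_n$ originates. If only continuity is available, I would use uniform continuity on $[\epsilon_n,1-\epsilon_n]^d$, with $\epsilon_n\downarrow0$ chosen to trade off against boundary contributions controlled by the envelope moments. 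This yields consistency but possibly not the stated rate, which is why I expect this step to require a mild strengthening of the assumptions.
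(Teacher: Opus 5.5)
\textbf{Where your proposal matches the paper.} Your first steps coincide with the paper's proof. You split the centered loss into marginal and copula parts by the triangle inequality. You use continuity of $c$ and the mean value theorem to write each rectangle mass as a deterministic volume times $c(w_i;\psi)$, and drop the volume since it cancels after centering. You then control the classes by Lipschitz-in-parameter bracketing. The paper finishes in one line: it asserts that Assumptions~\ref{ass:suff_psi}--\ref{ass:suff_eta} together with Theorem~2.7.11 of van der Vaart and Wellner give the rate $\log(n)/\sqrt n$ for $n^{-1}\sum_i\log c(w_i;\psi)$. In effect it treats the $w_i$ as i.i.d.\ copies of $W$ and the class as indexed by $\psi$ alone. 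You are right that this is not legitimate as written. The point $w_i$ depends on the whole sample through the ranks, on $\widetilde\eta$ through $F_j(y_{ij};\widetilde\eta_j)$, and on $\psi$ through the mean value theorem, so the paper's proof does not resolve the issue you raise. Your restriction of the volume factor to $J=\{j:\gamma_j<1\}$ also fixes a slip in the paper: its factor $\prod_j(1-\gamma_j)/(n+1)$ vanishes when some $\gamma_j=1$, which Assumption~\ref{ass:gammas} permits.

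\textbf{The gap.} Your proposal does not prove the lemma under Assumptions~\ref{ass:cop_cont}--\ref{ass:suff_eta}, and the gap is wider than the remainder step you flag.

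(a) Your i.i.d.\ proxy term is not ``handled exactly as the marginal part.'' Making $\widetilde\eta\mapsto\log c(\gamma F(y;\widetilde\eta)+(1-\gamma)F_0(y);\psi)$ Lipschitz needs Lipschitz continuity of $\log c$ in $u$ and of $F_j(y;\eta)$ in $\eta$, and neither is assumed. Assumption~\ref{ass:suff_eta} concerns $\log f_j$, with moments under the true law only. Assumption~\ref{ass:suff_psi} is a bound in $\psi$ at fixed $w$, with moments under a single law of $W$, whereas the law of $w_i^0$ moves with $\widetilde\eta$.

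(b) The remainder needs a modulus of continuity of $u\mapsto\log c(u;\psi)$, uniform in $\psi$ and with an integrable constant near the data. Continuity alone yields no rate.

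(c) On the DKW exceptional event you need a square-integrable envelope for $\sup_\psi|\log c(\cdot;\psi)|$ evaluated at the data-dependent points $w_i$. These may lie near the boundary of $(0,1)^d$, where Assumption~\ref{ass:suff_psi} gives no control.

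\textbf{What would close it.} The missing ingredient is an additional hypothesis. One version: a local Lipschitz bound for $\log c$ in $u$, such that this constant and the envelope of $\sup_\psi|\log c|$ have finite second moments over shrinking neighbourhoods of $W$; plus Lipschitz continuity of $F_j$ in $\eta$. Under such conditions your population-proxy-plus-DKW decomposition does close. It gives $O(\sqrt{\log n/n})+O(n^{-1/2})$, which is within the claimed $\log(n)/\sqrt n$. Neither the stated assumptions nor the paper's argument supply this, so you should not expect to complete the proof of the lemma as currently stated.
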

\begin{proof}[Proof of Lemma \ref{lem:suff_copula}]
Consider the SMI pseudo-likelihood
$$
\Delta_n(\eta,\psi;\gamma):=\prod_{i=1}^n \Delta_{\tilde{a}_{i 1}}^{\tilde{b}_{i 1}} \cdots \Delta_{\tilde{a}_{i d}}^{\tilde{b}_{i d}} C(u_i ; \psi),
$$
where $\tilde{a}_{i j}=\gamma_j u_{i j}+\left(1-\gamma_j\right) a_{i j}, \tilde{b}_{i j}=\gamma_j u_{i j}+\left(1-\gamma_j\right) b_{i j}, u_{i j}=F_j\left(y_{i j} ; \eta_j\right), a_{i j}=R\left(y_{i j}\right) /(n+1)$ and $b_{i j}=\left(R\left(y_{i j}\right)+1\right) /(n+1)$.

Now, the $i$-th term in $\Delta_n(\eta,\psi;\gamma)$ can be represented as
$$
\int_{\tilde{a}_{i 1}}^{\tilde{b}_{i 1}}\cdots \int_{\tilde{a}_{i d}}^{\tilde{b}_{i d}} c(u ; \psi) d u
$$
where $c(u;\psi)$ is the copula model density. Since $c(u ; \psi)$ is continuous in $u$, from the mean-value theorem, there exists 
$$
w_i=\left(w_{i 1}, \dots,w_{i d}\right) \in\bigtimes_{j=1}^d\left[\tilde{a}_{i j}, \tilde{b}_{i j}\right]
$$ such that
$$
\frac{1}{\prod_{j=1}^{d}\left(\tilde{b}_{i j}-\tilde{a}_{i j}\right)} \int_{\tilde{a}_{i 1}}^{\tilde{b}_{i 1}}\cdots \int_{\tilde{a}_{i d}}^{\tilde{b}_{i d}} c(u ; \psi) \dt u=c\left(w_i ; \psi\right)
$$
Noting that
$$
\prod_{j=1}^{d}\left(\tilde{b}_{i j}-\tilde{a}_{i j}\right)=\prod_{j=1}^d \frac{1-\gamma_j}{n+1},
$$
we have that
$$
\Delta_n(\eta,\psi;\gamma)=\left\{\prod_{i=1}^n \prod_{j=1}^d \frac{1-\gamma_j}{n+1}\right\}\prod_{i=1}^n c\left(w_i ; \psi\right).
$$
Since the posterior $p_{\mathrm{SMI},\gamma}(\psi,\eta\mid\mathcal{D})$ is as a ratio of the pseudo-likelihood $\Delta_n(\eta,\psi;\gamma)$, the terms  $\left\{\prod_{i=1}^n \prod_{j=1}^d \frac{1-\gamma_j}{n+1}\right\}$, and since loss functions are arbitrary up to an additive scale - and since the loss is defined with logarithms - we abuse notation in what follows and set  
$$
\Delta_n(\eta,\psi;\gamma):=\prod_{i=1}^n c\left(w_i ; \psi\right).
$$ 
Using this revised definition, Assumption \ref{ass:entropy} can be re-stated and bounded as
\begin{flalign*}
&\E\sup_{\eta,\psi}|\MD_n(\varphi)-\MD(\varphi)|\\&=
\E\sup_{\eta,\psi}|-n^{-1}\{\log\Delta_n^\gamma(\eta,\psi)-\E\log\Delta_n^\gamma(\eta,\psi)\}-n^{-1}\{\log f_n(\eta)-\E\log f_n(\eta)\}|
\\&\le \E\sup_{\eta,\psi}|n^{-1}\{\log\Delta_n^\gamma(\eta,\psi)-\E\log\Delta_n^\gamma(\eta,\psi)\}|+\E\sup_{\eta,\psi}|n^{-1}\{\log f_n(\eta)-\E\log f_n(\eta)\}|
\end{flalign*}
From Theorem 2.7.11 in \cite{vaart2023empirical}, Assumptions \ref{ass:suff_psi}-\ref{ass:suff_eta} are sufficient to ensure that 
$$
\E\sup_{\eta,\psi}|\MD_n(\varphi)-\MD(\varphi)|\lesssim \log(n)/\sqrt{n}.
$$

\end{proof}

\section{Additional information for the simulation study}\label{appendix:simulations}
We consider the three evaluation metrics
\begin{align*}
    \mathcal{L}_1&=\int\KL(f_1(y;\eta_1)\|f_j^*(y))p_{\text{SMI}}(\eta_1\mid\mathcal{D})d\eta_1\\
    \mathcal{L}_2&=\int\KL(f_1(y;\eta_2)\|f_2^*(y))p_{\text{SMI}}(\eta_2\mid\mathcal{D})d\eta_2 \\
    \mathcal{L}_\text{cop}&=\int (\tau-\tau^*)^2 p_{\text{SMI}}(\psi\mid\mathcal{D})d\psi.
\end{align*}
where $f_j(y;\eta_j)$, $j=1,2$, is the density of a log-normal distribution with parameters $\mu_j,\sigma_j^2$, $f_1^*(y)$ is the density of a log-normal distribution with parameters $\mu^*=-1,\sigma^*=1$, $f_2^*(y)$ is the density of a Gamma distribution with shape $\alpha^*=2$ and rate $\beta^*=1$, and $\tau^*=0.7$. Let $(\tau^{(m)},\mu_1^{(m)},\sigma_1^{(m)},\mu_2^{(m)},\sigma_2^{(m)})$, $m=1,\dots,M$, be a large sample from the SMI posterior. Due to closed form derivation for the Kullback-Leibler divergences, we can then approximate the integrals as
\begin{align*}
    \widehat{\mathcal{L}_1}&=\frac{1}{M}\sum_{m=1}^M\left[\frac{1}{2}\left((\mu_1^{(m)}+1)^2+\log \sigma_1^{2(m)}-1\right)-\log \sigma_1^{(m)}\right]\\
    \widehat{\mathcal{L}_2}&=\frac{1}{M}\sum_{m=1}^M\left[-2\mu_2^{(m)}-\frac{1}{2}\log\left(2\pi e\sigma_2^{2(m)}\right)+\exp\left(\mu_2^{(m)}+\frac{\sigma_2^{2(m)}}{2}\right)\right]\\
    \widehat{\mathcal{L}_\text{cop}}&=\frac{1}{M}\sum_{m=1}^M\left[(\tau^{(m)}-0.7)^2\right].
\end{align*}

\section{Additional information on the bond yields application}

\subsection{Parameterization of the skew-normal copula} \label{app:var_copula}
Assume $\widetilde Y_t=(Y_{\text{VIX},t},Y_{\text{AAA},t},Y_{\text{BBB},t})^\top$ follows a  stationary stochastic process. Then from Theorem~1 in~\cite{Smi2015}, the three
marginals of $\widetilde{Y}_t$ are time invariant, and the copula of
$Y_t=(\widetilde Y_t^\top,\widetilde Y_{t-1}^\top)^\top$ is also stationary, in that $C(u_t,u_{t-1})$ is invariant to $t$. Invariance in the implicit SN copula $C$ follows from invariance in the underlying SN distribution, which can be enforced by the two parameter constraints described below.

The first is constraining  
the scale matrix of the SN to be block Toeplitz, so that
\begin{equation}
\Omega=\begin{pmatrix}
        \Omega_0 & \Omega_1\\
        \Omega_1^\top & \Omega_0 
    \end{pmatrix}\,.\label{eq:toeplitz}
\end{equation}
An efficient way to impose this is to extend
the strategy employed by~\citet[Sec.3]{SmiVah2016} for Gaussian copulas as follows. Because $\Omega$ is a correlation matrix (which is necessary to identify the copula parameters) there is a one-to-one relationship between the entries of $\Omega$ and the (semi) partial correlations $\{\rho_{i,j};j=1,\ldots,d;i<j\}$ that are due to Yule and outlined in~\cite{Joe2006} and \cite{DanPou2009}. The partials $\rho_{i,j}$ are unconstrained on $[-1,1]$ and the Toeplitz structure is recovered by setting $b=d/2$ and $\rho_{i,j}=\rho_{b+i,b+j}$ for $1\leq i < j \leq b$ with $b=3$ in our example. This results in $b(b-1)/2$ unique partials to parameterize $\Omega_0$ and $b^2$ unique partials to parameterize $\Omega_1$. To compute inference we transform each partial to the real line as $\tilde{\rho}_{i,j}=\Phi^{-1}((\rho_{i,j}+1)/2)$.  

The second constraint follows from writing $\alpha = \left(1-\delta^\top\Omega^{-1}\delta\right)^{-\frac{1}{2}}\Omega^{-1}\delta$ in terms of the slant parameters $\delta$ in~\cite{AzzVal1996}. If we partition $\delta=(\delta_2^\top,\delta_1^\top)^\top$ and $\alpha=(\alpha_2^\top,\alpha_1^\top)^\top$ consistent with~\eqref{eq:toeplitz}, then 
invariance follows when $\delta_2=\delta_1$. This implies that $\alpha_{1}=(\Omega_0-\Omega_1)^{-1}(\Omega_0-\Omega_1^\top)\alpha_2$. All together,  $\Omega_0$ is parameterized through 3 unique transformed partials, $\Omega_1$ is parameterized through 9 unique transformed partials, and $\alpha$ is parameterized by 3 unique values in $\alpha_2$, giving an unrestricted copula parameter vector $\psi$ of dimension 15. Because there are only three unique marginals, $\gamma$ is of dimension three.

\subsection{Additional results}\label{app:application}
This section presents additional results for the financial application discussed in Section~6 of the manuscript. 

Figure~\ref{fig:app_opt} shows how the utility improves with the number of SMI posteriors checked. In this example, BO converges fast after a brief exploration of the $\gamma$ space.

Table~\ref{tab:app} reports $\Delta_\text{Major}(0.1)$ and $\Delta_\text{Minor}(0.1)$, as well as Kendall's $\tau$ and the parameters of the marginal skew-Gaussian copula for all pairs evaluated at the posterior means, $\widehat{\psi}$, for the conventional posterior, the fully cut posterior, and the optimal SMI posterior. $\Delta_\text{Major}(0.1)$, $\Delta_\text{Minor}(0.1)$, and Kendall's $\tau$ are estimated using a large sample from the copula model.

Figures~\ref{fig:app_dependence_smi}, \ref{fig:app_dependence_conventional}, and \ref{fig:app_dependence_cut} give quantile dependence plots for all pairs for the SMI, the conventional, and the fully cut posterior respectively.

\begin{figure}[tbh!]
    \centering
    \includegraphics[width=0.95\linewidth,keepaspectratio]{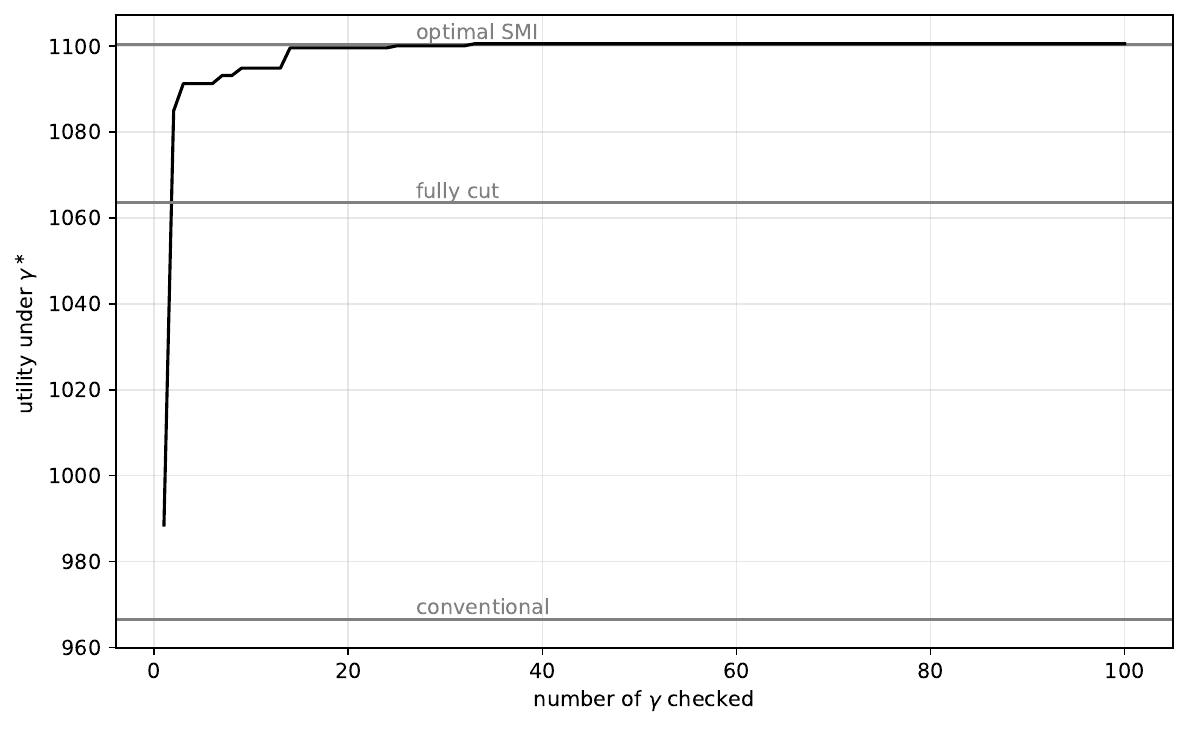}
    \caption{\small Convergence of BO over $100$ iterations for the bond yields example. The vertical axis is the utility $u(\gamma)$, and the black line indicates the optimal value $u(\gamma^\ast)$ against number of $\gamma$ values explored. The values $u((1,1,1)^\top)$, $u((0,0,0)^\top)$ are plotted as gray horizontal lines. 
    }
    \label{fig:app_opt} 
\end{figure}

\begin{table}[ph!]
\centering
\begin{adjustbox}{max width=\textwidth, max height=0.43\textheight}
\begin{tabular}{ccccccc}&  &  & \multicolumn{2}{c}{10 \% Quantile asymmetry} & \multicolumn{2}{c}{Parameters} \\& Pair & Kendall's $\tau$ & $\Delta_\text{major}(0.1)$ & $\Delta_\text{minor}(0.1)$ & $\rho$ & $\alpha$ \\ \hline
conventional & \multirow{3}{*}{VIX--AAA} &  -0.2957 & -0.0003 & 0.0059 & -0.4486 & $( 0.0014 , -0.0115 )^\top$ \\
cut & &  -0.2577 & 0.0158 & -0.1542 & -0.6185 & $( 0.2725 , -4.7911 )^\top$ \\
optimal SMI & &  -0.2487 & 0.0215 & -0.2082 & -0.643 & $( 0.5786 , -5.1495 )^\top$ \\
\hline
conventional & \multirow{3}{*}{VIX--BBB} &  -0.2104 & 0.0008 & 0.002 & -0.3243 & $( 0.0039 , -0.0082 )^\top$ \\
cut & &  -0.1812 & 0.0165 & -0.1266 & -0.534 & $( 0.5458 , -1.6735 )^\top$ \\
optimal SMI & &  -0.1725 & 0.015 & -0.1493 & -0.5527 & $( 0.7125 , -1.6371 )^\top$ \\
\hline
conventional & \multirow{3}{*}{VIX--VIXlag} &  0.8355 & -0.001 & 0.0 & 0.9669 & $( 0.0033 , 0.0033 )^\top$ \\
cut & &  0.8115 & 0.0217 & 0.0 & 0.968 & $( 0.4245 , 0.4245 )^\top$ \\
optimal SMI & &  0.833 & 0.0206 & 0.0 & 0.9763 & $( 0.4865 , 0.4865 )^\top$ \\
\hline
conventional & \multirow{3}{*}{VIX--AAAlag} &  -0.3003 & -0.0002 & 0.0035 & -0.4552 & $( 0.0013 , -0.0115 )^\top$ \\
cut & &  -0.2653 & 0.0141 & -0.1515 & -0.624 & $( 0.2293 , -4.7839 )^\top$ \\
optimal SMI & &  -0.2548 & 0.0196 & -0.2016 & -0.6471 & $( 0.5384 , -5.1084 )^\top$ \\
\hline
conventional & \multirow{3}{*}{VIX--BBBlag} &  -0.2172 & 0.0004 & 0.0019 & -0.3345 & $( 0.0038 , -0.0082 )^\top$ \\
cut & &  -0.1915 & 0.0142 & -0.1224 & -0.5438 & $( 0.522 , -1.664 )^\top$ \\
optimal SMI & &  -0.183 & 0.0145 & -0.147 & -0.5618 & $( 0.6889 , -1.6229 )^\top$ \\
\hline
conventional & \multirow{3}{*}{AAA--BBB} &  0.7745 & -0.0035 & 0.0 & 0.9376 & $( -0.0264 , 0.0153 )^\top$ \\
cut & &  0.6917 & -0.0816 & 0.0 & 0.9408 & $( -9.8266 , 3.0416 )^\top$ \\
optimal SMI & &  0.7013 & -0.0514 & 0.0 & 0.9422 & $( -15.0267 , 5.1941 )^\top$ \\
\hline
conventional & \multirow{3}{*}{AAA--VIXlag} &  -0.2932 & -0.0009 & -0.0059 & -0.4449 & $( -0.0115 , 0.0014 )^\top$ \\
cut & &  -0.2531 & 0.0164 & 0.158 & -0.6155 & $( -4.7979 , 0.2964 )^\top$ \\
optimal SMI & &  -0.2455 & 0.0221 & 0.2125 & -0.6408 & $( -5.1739 , 0.6005 )^\top$ \\
\hline
conventional & \multirow{3}{*}{AAA--AAAlag} &  0.9375 & 0.0004 & 0.0 & 0.9952 & $( -0.0061 , -0.0061 )^\top$ \\
cut & &  0.9063 & -0.1052 & 0.0 & 0.9964 & $( -2.4817 , -2.4818 )^\top$ \\
optimal SMI & &  0.9167 & -0.0909 & 0.0 & 0.9972 & $( -2.5738 , -2.5736 )^\top$ \\
\hline
conventional & \multirow{3}{*}{AAA--BBBlag} &  0.7689 & -0.0037 & 0.0 & 0.9345 & $( -0.0255 , 0.0143 )^\top$ \\
cut & &  0.6819 & -0.09 & 0.0 & 0.938 & $( -8.8588 , 2.547 )^\top$ \\
optimal SMI & &  0.6917 & -0.063 & 0.0 & 0.9394 & $( -12.2755 , 3.9788 )^\top$ \\
\hline
conventional & \multirow{3}{*}{BBB--VIXlag} &  -0.2059 & -0.0005 & -0.0025 & -0.3173 & $( -0.0083 , 0.0039 )^\top$ \\
cut & &  -0.1736 & 0.0175 & 0.1296 & -0.5272 & $( -1.6809 , 0.5623 )^\top$ \\
optimal SMI & &  -0.1652 & 0.0161 & 0.152 & -0.5462 & $( -1.648 , 0.7296 )^\top$ \\
\hline
conventional & \multirow{3}{*}{BBB--AAAlag} &  0.7639 & -0.0067 & 0.0 & 0.9316 & $( 0.0135 , -0.0246 )^\top$ \\
cut & &  0.6761 & -0.0954 & 0.0 & 0.9361 & $( 2.2893 , -8.3689 )^\top$ \\
optimal SMI & &  0.6871 & -0.0593 & 0.0 & 0.9381 & $( 3.5983 , -11.4407 )^\top$ \\
\hline
conventional & \multirow{3}{*}{BBB--BBBlag} &  0.9402 & -0.005 & 0.0 & 0.9956 & $( -0.0048 , -0.0048 )^\top$ \\
cut & &  0.9097 & -0.036 & 0.0 & 0.995 & $( -0.8979 , -0.8979 )^\top$ \\
optimal SMI & &  0.9095 & -0.0307 & 0.0 & 0.9949 & $( -0.8787 , -0.8787 )^\top$ \\
\hline
\end{tabular}
\end{adjustbox}
\caption{\small Bond yields. Kendall's $\tau$ correlation coefficient, asymmetry in the major and minor diagonals at level $\zeta=0.1$, as well as the parameters of the marginal skew-Gaussian copula for all pairs (rounded to 4 digits) evaluated at the posterior means, $\widehat{\psi}$, for the conventional posterior, the fully cut posterior, $\gamma=(0,0,0)^\top$, and the optimal SMI posterior.}
\label{tab:app}
\end{table}

\begin{figure}[tbh]
    \centering
    \includegraphics[width=0.85\linewidth,keepaspectratio]{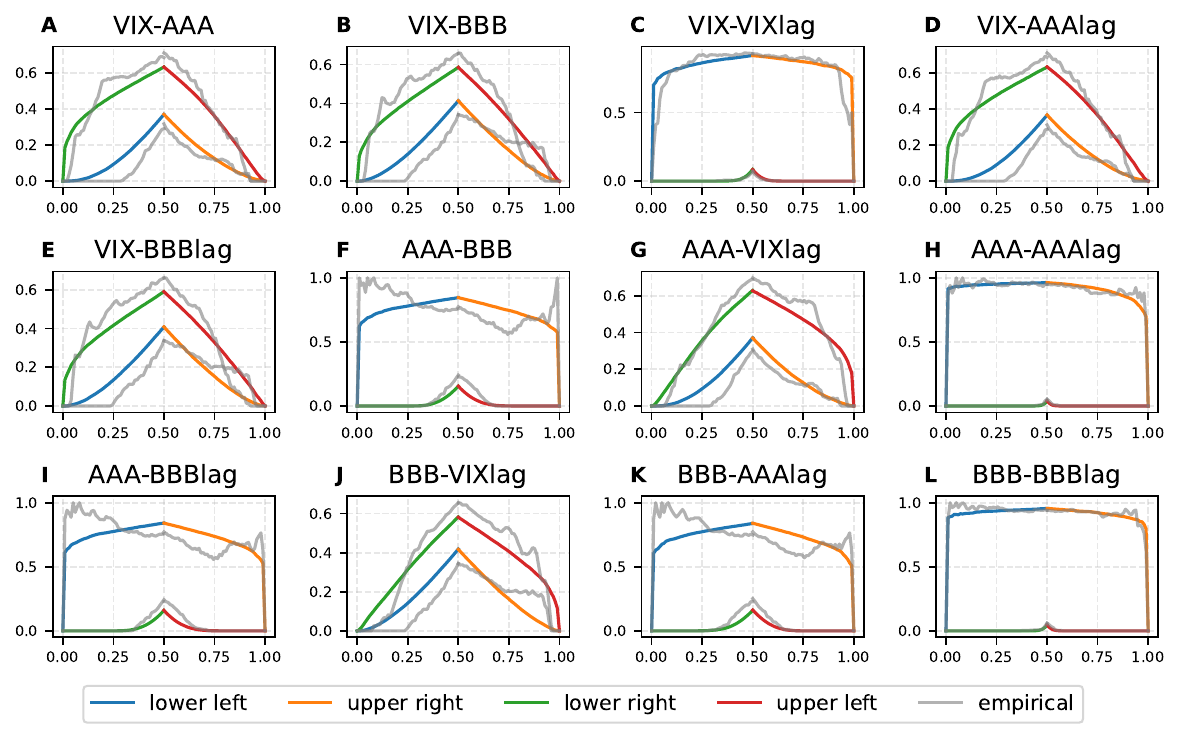}
    \caption{\small Bond yields. $\rho_{LL}(\zeta)$ (blue), $\rho_{UR}(1-\zeta)$ (orange), $\rho_{LR}(\zeta)$ (green), and $\rho_{UL}(1-\zeta)$ (red) versus $\zeta$ under the SMI posterior mean $\widehat{\psi}$ as well as empirical quantile dependencies (gray) for all pairs.
    }
    \label{fig:app_dependence_smi}
\end{figure}

\begin{figure}[tbh]
    \centering
    \includegraphics[width=0.85\linewidth,keepaspectratio]{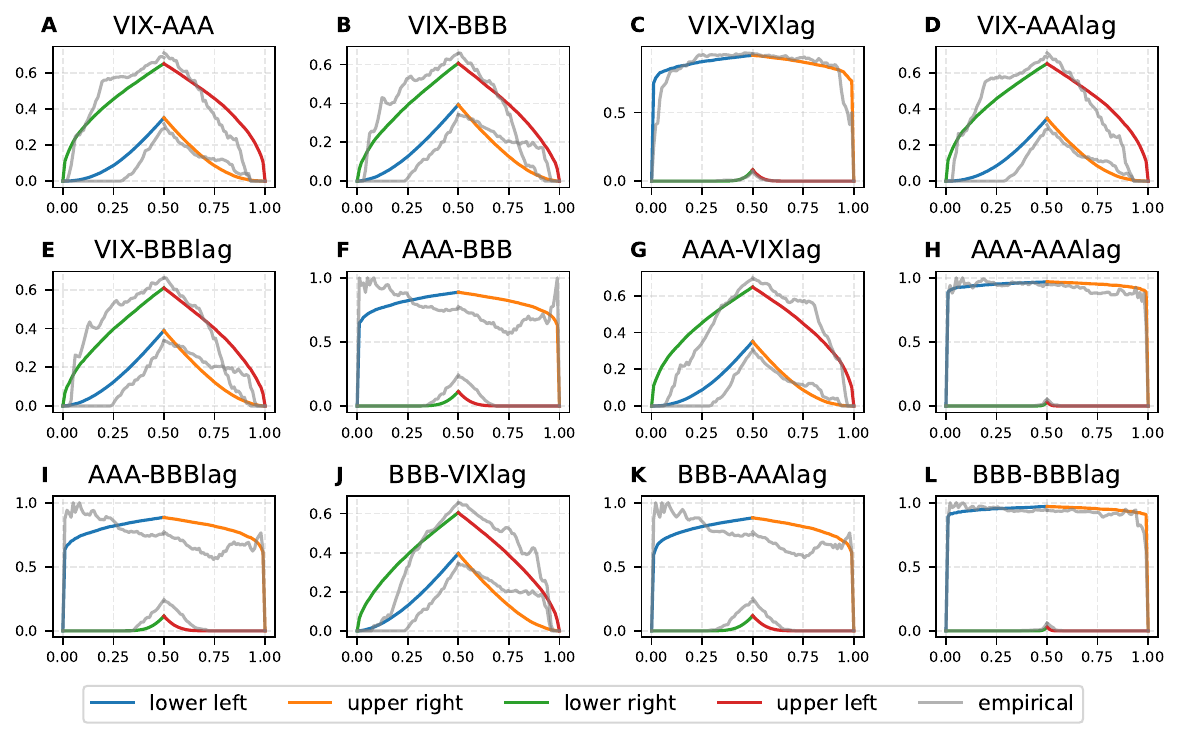}
    \caption{\small Bond yields. $\rho_{LL}(\zeta)$ (blue), $\rho_{UR}(1-\zeta)$ (orange), $\rho_{LR}(\zeta)$ (green), and $\rho_{UL}(1-\zeta)$ (red) versus $\zeta$ under the conventional posterior mean $\widehat{\psi}$ as well as empirical quantile dependencies (gray) for all pairs. 
    }
    \label{fig:app_dependence_conventional} 
\end{figure}

\begin{figure}[tbh]
    \centering
    \includegraphics[width=0.85\linewidth,keepaspectratio]{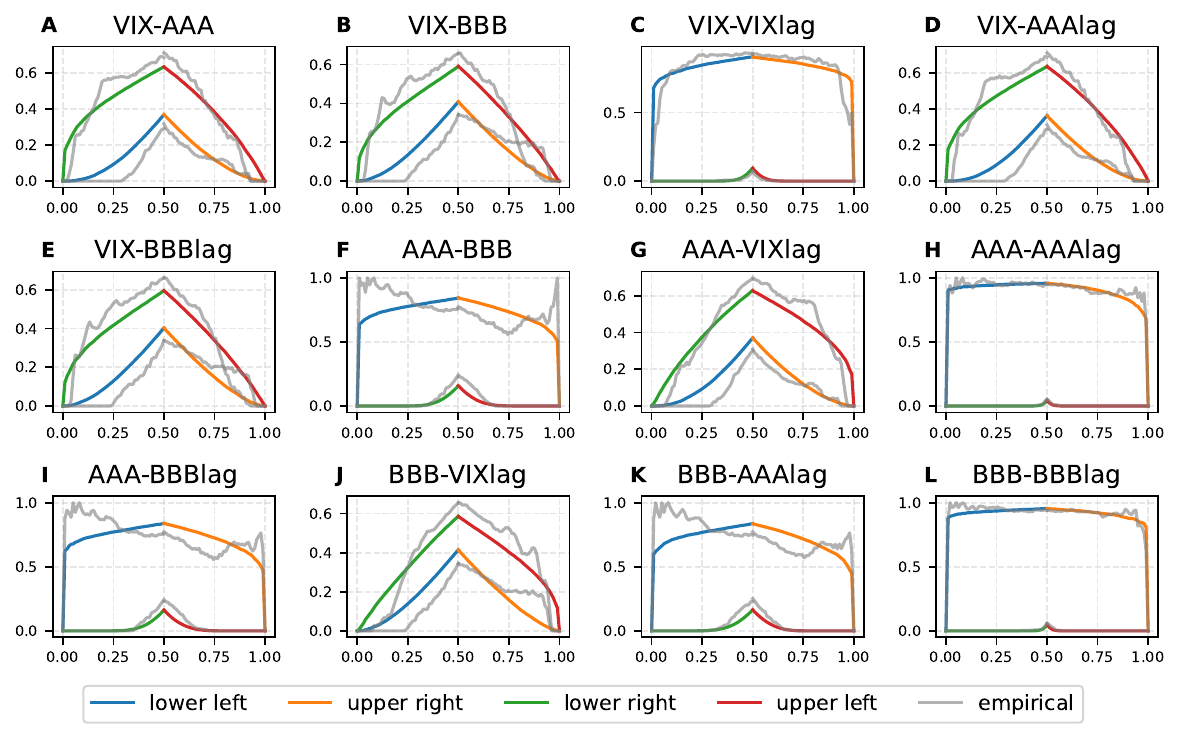}
    \caption{\small Bond yields. $\rho_{LL}(\zeta)$ (blue), $\rho_{UR}(1-\zeta)$ (orange), $\rho_{LR}(\zeta)$ (green), and $\rho_{UL}(1-\zeta)$ (red) versus $\zeta$ under the fully cut posterior mean $\widehat{\psi}$ as well as empirical quantile dependencies (gray) for all pairs. 
    }
    \label{fig:app_dependence_cut} 
\end{figure}

\end{document}